\newtheorem{proposition}{Proposition}
\begin{document}
%\title{SWIPT in MISO Full-Duplex Systems}
\title{Simultaneous Wireless Information and Power Transfer in MISO Full-Duplex Systems}
\author{Alexander A. Okandeji, Muhammad R. A. Khandaker, Kai-Kit Wong,\\Gan Zheng, Yangyang Zhang, and Zhongbin Zheng
\thanks{\scriptsize{A. A. Okandeji, M. R. A. Khandaker and K. K. Wong are with the Department of Electronic and Electrical Engineering, University College London, United Kingdom (e-mail: $\rm \{alexander.okandeji.13; m.khandaker; kai\text{-}kit.wong\}@ucl.ac.uk$). G. Zheng is with Wolfson School of Mechanical, Electrical and Manufacturing Engineering, Loughborough University, United Kingdom (e-mail: $\rm g.zheng@lboro.ac.uk$). Y. Zhang is with Kuang-Chi Institute of Advanced Technology, Shenzhen, China while Z. Zheng is with East China Institute of Telecommunications, China Academy of Information and Communications Technology, Shanghai, China.
\par This work was supported by the Presidential Special Scholarship Scheme for Innovation and Development (PRESSID), Federal Republic of Nigeria. This work was also supported in part by the EPSRC under grants EP/K015893/1 and EP/N008219/1.}}
}
%\maketitle
\maketitle
\begin{abstract}
This paper investigates a multiuser multiple-input single-output (MISO) full-duplex (FD) system for simultaneous wireless information and power transfer (SWIPT), in which a multi-antenna base station (BS) simultaneously sends wirelessly information and power to a set of single-antenna mobile stations (MSs) using power splitters (PSs) in the downlink and receives information in the uplink in FD mode. In particular, we address the joint design of the receive PS ratio and the transmit power at the MSs, and the beamforming matrix at the BS under signal-to-interference-plus-noise ratio (SINR) and the harvested power constraints. Using semidefinite relaxation (SDR), we obtain the solution to the problem with imperfect channel state information (CSI) of the self-interfering channels. Furthermore, we propose another suboptimal zero-forcing (ZF) based solution by separating the optimization of the transmit beamforming vector and the PS ratio. Simulation results are provided to evaluate the performance of the proposed beamforming designs.
 \end{abstract}

 \section{Introduction}\label{sec_intro}
Since signals that carry information can also be used as a vehicle for transporting energy at the same time, simultaneous wireless information and power transfer (SWIPT) has attracted huge interest from industrial and academic communities \cite{ruhuul2,Alex,liao,ruhuul2a,Alex_3,num_5, weighted, power}. Through SWIPT, wireless data and energy access is made available to mobile users at the same time which brings great convenience. A practical application for SWIPT appears to be the wireless sensor networks mounted at remote and difficult-to-access locations and powered only by a battery with limited operation time \cite{mimo}. Recharging or replacing batteries often requires huge cost and can sometimes be inconvenient. Fittingly, the huge amount of electromagnetic energy in the environment resulting from numerous radio and television broadcasting may serve as a useful opportunistic as well as a greener alternative for harvesting energy to power such devices.

The fundamental concept of SWIPT was first introduced in \cite{varshney}. To describe the trade-off between the rates at which energy and dependable information is transmitted over a noisy channel, \cite{varshney} proposed a capacity-energy function. Work done in \cite{varshney} assumed that the receiver is capable of simultaneously decoding information and harvesting energy from the same received signal; however, this assumption does not hold in practice as practical circuits for harvesting energy from the received signal are not able to decode information directly. Consequently, to coordinate SWIPT at the receiver end, \cite{Architecture} developed a state-of-the-art receiver architecture and the corresponding rate-energy trade-off was analyzed.
%Furthermore, information and energy receivers in practice operate with different power sensitivity (e.g., --10dBm for energy receivers versus --60dBm for information receivers).

To facilitate wireless information and power transfer at the receiver, \cite{mimo, Architecture} proposed two practical receiver architecture designs, namely time switching (TS) and power splitting (PS). In TS, the receiver switches over time to achieve information decoding and energy harvesting, while with PS, the receiver splits the received signal into two streams of different power in order to decode information and harvest energy separately. Theoretically, TS can be regarded as a special form of PS with only a binary split power ratios. Therefore, it suffices to say that PS generally achieves a better rate-energy transmission trade-off than TS \cite{mimo, Architecture}. However, in practical circuits, TS design architecture requires a simpler switch while PS requires a radio frequency (RF) signal splitter. Additionally, \cite{mimo} studied a scenario where a base station (BS) was broadcasting to two mobile users using TS. It is worth noting that the scenario in \cite{mimo} simplifies the receiver design but at a cost of compromising the efficiencies of perfect SWIPT technology. In contrast, the authors in \cite{Architecture} investigated SWIPT technology using practical receiver architecture for point-to-point systems.

A key concern for SWIPT is a decay of the power transfer efficiency which increases with the transmission distance. To overcome this challenge, the work in \cite{ruhuul2,mimo} exploited spatial diversity by multi-antenna techniques. Recently in \cite{ruhuul,QOS,joint_transmit}, SWIPT multicasting in multiple-input single-output (MISO), multiple-input multiple-output (MIMO) and SWIPT for MISO broadcasting systems were addressed. In particular, \cite{ruhuul,QOS} investigated the scenarios with perfect and imperfect channel state information (CSI) at the BS and presented algorithms for joint multicast transmit beamforming and receive PS for minimizing the BS transmit power subject to the quality-of-service (QoS) constraints at each receiver while \cite{joint_transmit} investigated the multi-antenna and PS enabled SWIPT system by considering a MISO broadcast channel consisting of one multi-antenna BS and a set of single antenna mobile stations (MSs).

Conventionally, wireless communication nodes operate in half-duplex (HD) mode in which transmission and reception of radio signals take place over orthogonal channels to avoid crosstalk. Recent advances, however, suggest that full-duplex (FD) communications that allows simultaneous transmission and reception of signals over the same frequency channel is possible \cite{exp, method_broad}. This brings a new opportunity for SWIPT in FD systems \cite{Alex, Alex2}. In addition to essentially doubling the bandwidth, FD communications also find additional source of energy in the inherent self-interfering signal.

However, a key challenge for realizing FD communication is to tackle self-interference (SI), which has to be significantly suppressed, if not cancelled completely, to the receiver's noise floor \cite{pfr}. Recently, much interest has been on the problem of SI cancellation in FD systems by investigating different system architectures and SI cancellation techniques to mitigate the self-interfering signal \cite{Transmit_strategies,practical,antenna,on_self,duarte}. The authors in \cite{Transmit_strategies} analyzed the transmit strategies for FD point-to-point systems with residual SI (RSI) and developed power adjustment schemes which maximize the sum-rate in different scenarios.
%\textcolor{red}{In this work, we assume perfect CSI for the uplink and downlink channels as this is an idealization of actual practical systems. Effectively, perfect CSI can be accomplished from measurements. In particular, via the transmission of dedicated training symbols at the receiver. In contrast, we assume an imperfect CSI for loop channels as a result of the fact that the distribution of self-interference channels are unknown. Thus, it becomes difficult to achieve perfect CSI for SI channels via measurements. Furthermore, SI channel measurements results obtained in \cite{Ref_X} showed that the SI channel has a multipath nature. These multiple paths can have higher power compared to the line of sight (LOS) path. This behaviour necessitates the need of an adaptive cancellation technique whose measurement is used to cancel both the LOS path and the delayed version of the same, which is not the primary aim of this research work. However, it is a general practice to model the SI channels for simplicity as Gaussian channels \cite{Alex_3}.}

Digital SI cancellation (SIC) for FD wireless systems was studied in \cite{duarte}, in which it was demonstrated that for antenna separation and digital cancellation at 20cm and 40cm spacing between the interfering antennas, it was possible to achieve, respectively, 70dB and 76dB SIC. Motivated by the work in \cite{duarte}, \cite{exp} then showed that the effectiveness of SIC improves for active cancellation techniques as the received interference power increases. Furthermore, based on extensive experiments, it was revealed in \cite{exp} that a total average SIC of 74dB can be achieved. However, SIC cannot suppress the SI down to the noise floor \cite{exp,pfr}. In particular, \cite{pfr} proposed a digital SIC technique that can eliminate all transmitter impairments, and significantly mitigate the receiver phase noise and non-linearity effects. The technique was shown to mitigate the SI signal to $\approx$ 3dB above the receiver noise floor, which results to about $67\text{--}76\%$ rate improvement compared to conventional HD systems at 20dBm transmit power at the nodes. Accordingly, \cite{duarte} studied different mechanisms for SIC and showed that the power of the interfering signal can be estimated.

In the literature, SWIPT in FD systems have been considered in \cite{Alex,Alex2}. In \cite{Alex}, the problem of maximizing the sum-rate for SWIPT in FD bidirectional communication system was considered subject to energy harvesting and transmit power constraints at both nodes. On the other hand, \cite{Alex2} investigated SWIPT in FD MIMO relay system and addressed the optimal joint design of the receive PS ratio and beamforming matrix at the relay in order to maximize the achievable sum-rate.

In this paper, we extend the HD MISO SWIPT communication scenario in \cite{joint_transmit} to the FD case where the harvested energy at the MSs is utilized to send feedback information to the BS. We aim to minimize the end-to-end transmit power for SWIPT in FD MISO systems while satisfying the QoS requirements for each MS by optimizing jointly the receive PS ratio and the transmit power at the MSs, and the beamforming matrix at the BS. Specifically, for the MISO FD channel, we assume perfect CSI for the uplink and downlink channels as this is an idealization of actual practical systems. Effectively, perfect CSI can be accomplished from fine estimation, via the transmission of dedicated training symbols at the receiver. In contrast, we consider an imperfect CSI for loop channels due to the fact that the distribution of SI channels are unknown, and that the SI channel measurements results obtained in \cite{Ref_X} indicated that the SI channel has a multipath nature. These multiple paths can have higher power compared to the line of sight (LOS) path. This behaviour necessitates the need of an adaptive cancellation technique whose measurement is used to cancel both the LOS path and the delayed version of the same, which is not the aim of this work. However, it is a general practice to model the SI channels for simplicity as Gaussian channels \cite{Alex_3}. Accordingly, due to insufficient knowledge of the self-interfering channel, we consider a deterministic model for channel uncertainties where the magnitude of the estimation error as well as the SI power is bounded. Since the problem is non-convex, an alternating optimization approach is proposed. Semidefinite relaxation technique (SDR) is also adopted. Furthermore, we propose a zero-forcing (ZF) suboptimal solution by separating the optimization of the transmit beamforming vector at the BS, the PS ratios and the MS transmit power.
%and the receive beamforming vector

The rest of this paper is organized as follows. In Section \ref{sec_for}, the system model of a MISO SWIPT FD system with PS based MSs is introduced. The proposed joint optimization is devised in Section \ref{opt_soltn}. Section \ref{sub} introduces the ZF suboptimal solution while Section \ref{N_exam} provides the simulation results under various scenarios. Conclusions will be drawn in Section \ref{conc}.

{\em Notations}---We denote scalars by non-bold letters. Boldface lowercase letters are used to represent vectors, while boldface uppercase letters are for matrices. Further, ${\rm Tr}(\mathbf{A})$, ${\rm Rank}(\mathbf{A})$, $\mathbf{A}^T$ and $\mathbf{A}^H$ denote the trace, rank, transpose and conjugate transpose of ${\bf A}$, respectively, while $\mathbf{A}$ $\succeq 0$ means that $\mathbf{A}$ is a positive semidefinite matrix. Also, $\mathbf{I}_n$ denotes an $n \times n$ identity matrix, $\|\cdot€\|$ returns the Euclidean norm. The distribution of a circularly symmetric complex Gaussian (CSCG) vector with mean $\boldsymbol{\mu}$ and covariance matrix $\mathbf{C}$ is denoted by $\mathcal{CN}(\boldsymbol{\mu}, \mathbf{C}).$ Finally, $\mathbb{C}^{m \times n}$ is the space of $m \times n$ complex matrices.
\begin{figure}[ht]
\centering
\includegraphics*[width=8cm]{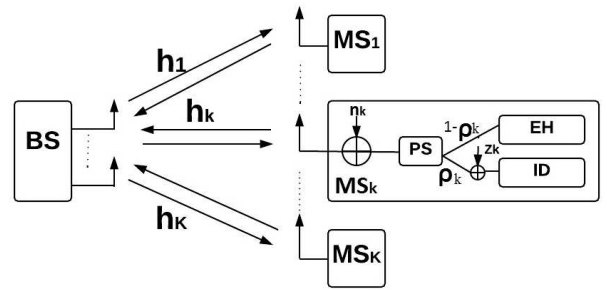}
\caption{A FD MISO SWIPT system.}\label{fig:sys}
\label{SI}
\end{figure}

\section{System Model and Problem Formulation}\label{sec_for}
In this paper, we study the end-to-end transmit power minimization approach for a multiuser MISO FD SWIPT system consisting of one BS and $K$ MSs, denoted by ${\rm MS}_1,\dots, {\rm MS}_K$, respectively, operating in FD mode as shown in Fig.~\ref{fig:sys}. The BS simultaneously transmits wireless information and power to a set of single antenna MSs in the downlink and receives information in the uplink in FD mode. We denote the number of transmit and receive antennas at the BS, respectively, as $N_t$ and $N_r$, and each MS uses an identical pair of transmitter and receiver antennas for signal transmission and reception. In the first phase, the BS performs transmit beamforming to send information to the MSs in the downlink while in the next phase, the MSs use the harvested energy from its own reception to send feedback information to the BS in the reverse link with a transmit power of $P_{{\rm up},k}$. The complex baseband transmitted signal at the BS can be expressed as
\begin{equation}
{\mathbf{x}}^{\mathrm{BS}} = \sum^K_{k = 1} {\mathbf{v}}_k s_k,
\end{equation}
where $s_k\sim\mathcal{CN} (0,1)$ denotes the transmitted information symbol to ${\rm MS}_k$, and $\mathbf{\mathbf{v}}_k$ represents the corresponding transmit beamforming vector. It is assumed that $s_k$, for $k = 1, \dots,€€€ K$, are independent and identically distributed (i.i.d.) CSCG random variables. We further assume quasi-static flat-fading for all MSs and denote $ {\mathbf{h}}_{{\rm dl},k}$ and ${{\mathbf{h}}}_{{\rm ul},k}$ as the conjugated complex channel vector from the BS to ${\rm MS}_k$ and from ${\rm MS}_k$ to the BS, respectively. The received signal at ${\rm MS}_k$ can be written as
\begin{equation}\label{SI^*}
y_k = \underbrace{{\mathbf{h}}^H_{{\rm dl},k} {\mathbf{v}}_k s_k}_{\mathrm{desired\hspace{0.7mm}signal}} + \underbrace{ \sum^K_{j \neq k}  {{\mathbf{h}}^H_{{\rm dl},k} {\mathbf{v}}_j s_j}}_{\mathrm{interfering\hspace{0.7mm}signal}} + \!\! \underbrace{{{h}}_{\mathrm{SI},k} m_k } _{\mathrm{self\text{-}interference}} \!\!\!\! +  n_k,
\end{equation}
where $m_k$ is the information carrying symbol of ${\rm MS}_k$ and $n_k\sim\mathcal{CN} (0, \sigma_k)$ denotes the antenna noise at the receiver of ${\rm MS}_k$. We assume that each MS is equipped with a PS device which coordinates the processes of information decoding and energy harvesting. In particular, we assume that the PS splits the received signal power such that a $\rho \in (0,1)$ portion of the signal power is fed to the information decoder (ID) and the remaining $(1-\rho)$ to the energy harvester (EH). Accordingly, the signal split to the ID of ${\rm MS}_k$ can be written as
\begin{multline}
y^{\mathrm{ID}}_k = \sqrt{\rho_k}
\left(\underbrace{{\mathbf{h}}^H_{{\rm dl},k} {\mathbf{v}}_k s_k}_{\mathrm{desired\hspace{0.7mm}signal}}\!\! +\!\! \underbrace{ \sum^K_{j \neq k}  {{\mathbf{h}}^H_{{\rm dl},k} {\mathbf{v}}_j s_j}}_{\mathrm{interfering\hspace{0.7mm}signal}} + \!\!\!\!\! \underbrace{ {{h}}_{\mathrm{SI},k} m_k } _{\mathrm{self\text{-}interference}}  \!\!\!\!\!\!\!+ n_k \right)\\
+ z_k,
\end{multline}
where $z_k \sim \mathcal{CN} (0,\delta^2_k)$ denotes the additional processing noise introduced by the ID at ${\rm MS}_k$. The signal split to the EH of ${\rm MS}_k$ is given by
\begin{equation}
y^{\rm EH}_k = \sqrt{1-\rho_k}\left( \sum^K_{j = 1}  {{\mathbf{h}}^H_{{\rm dl},j} {\mathbf{v}}_j s_j} + {{h}}_{\mathrm{SI},k} m_k     + n_k\right).
\end{equation}
Meanwhile, the signal received at the BS can be written as
%\begin{equation}\label{SI2}
%\mathbf{y}^{\mathrm{BS}} = \underbrace{ \sum^K_{k=1} {{\mathbf{h}}}_{ul,k} m_k }_{\mathrm{desired\hspace{0.7mm}signal}}  +  \underbrace{ { { {\mathbf{H}}}}_ {\mathrm{SI},\mathrm{BS}} \sum^K_{k = 1}  {\mathbf{v}}_k s_k}_{\mathrm{self\text{-}interference}}  + {\mathbf{n}}_{\mathrm{BS}},
%\end{equation}
\begin{equation}\label{SI2}
\mathbf{y}^{\mathrm{BS}} = \underbrace{ \sum^K_{k=1} {{\mathbf{h}}}_{{\rm ul},k} m_k }_{\mathrm{desired\hspace{0.7mm}signal}}  +  \underbrace{  \sum^K_{j = 1} { { {\mathbf{H}}}}_ {\mathrm{SI},\mathrm{BS}} {\mathbf{v}}_j s_j}_{\mathrm{self\text{-}interference}}  + {\mathbf{n}}_{\mathrm{BS}},
\end{equation}
where ${\mathbf{n}}_{\rm BS}\backsim \mathcal{CN} (\mathbf{0}, \sigma^2_\mathrm{{BS}} \mathbf{I})$ is the additive white Gaussian (AWGN) noise vector at the BS. To decode the signal from MS$_k,$ the BS applies a receive beamformer $\mathbf{w}_k$ to equalize the received signal from ${\rm MS}_k$ expressed as
%\begin{eqnarray}
%s_k^{\rm UL} \!\!\!& = &\!\!\! \mathbf{w}^H_k \mathbf{h}_{ul,k} m_{k} +  \mathbf{w}^H_k \sum^K_{j \neq k} {{\mathbf{h}}}_{ul,k} m_j \nonumber\\
%  \!\!\!& + &\!\!\!   \mathbf{w}^H_k { { {\mathbf{H}}}}_ {\mathrm{SI},\mathrm{BS}} \sum^K_{k = 1}  {\mathbf{v}}_k s_k + \mathbf{w}^H_k {\mathbf{n}}_{\mathrm{BS}}.
% \end{eqnarray}
\begin{multline}
s_k^{\rm UL} =\mathbf{w}^H_k \mathbf{h}_{{\rm ul},k} m_{k} +  \mathbf{w}^H_k \sum^K_{j \neq k} {{\mathbf{h}}}_{{\rm ul},_{j}} m_j \\
+{ \mathbf{w}^H_k  \sum^K_{j = 1}   { { {\mathbf{H}}}}_ {\mathrm{SI},\mathrm{BS}}  {\mathbf{v}}_j}   + \mathbf{w}^H_k {\mathbf{n}}_{\mathrm{BS}}.
 \end{multline}
The signal-to-interference plus noise ratio (SINR) at the BS from ${\rm MS}_k$ is therefore given by
\begin{multline}\label{SINR1}
\gamma_k^{\rm BS}=\\
\frac{  P_{up,k} | \mathbf{w}^H_k {{\mathbf{h}}}_{{\rm ul},k} \hspace{0.3mm}|^2}{  \sum^K_{j \neq k}\! P_{{\rm up},j} |\mathbf{w}^H_k \mathbf{h}_{{\rm ul},j}|^2 +\sum^K_{\!j = \! 1} |\mathbf{w}^{H}_k{{{\mathbf{H}}}}_{\mathrm{SI},\mathrm{BS}} {\mathbf{v}}_{j} |^2   +  \sigma_{\mathrm{BS}}^2\| \mathbf{w}_k\|^2}.
\end{multline}
Accordingly, the SINR at the ID of ${\rm MS}_k$ is given by
\begin{equation}\label{SINR2}
\gamma^{\rm MS}_k = \frac{\rho_k|{\mathbf{h}}^H_{{\rm dl},k} {\mathbf{v}}_k|^2}{\rho_k\left( \sum^K_{j \neq k} |{\mathbf{h}}^H_{{\rm dl},k} {\mathbf{v}}_j|^2  + |{{{h}}}_{\mathrm{SI},k}|^2 P_{{\rm up},k}  + \sigma^2_k \right) + \delta^2_k}.
\end{equation}
The harvested power by the EH of ${\rm MS}_k$ is given by
\begin{equation}\label{energy}
Q_k\! = \eta(1-\rho_k) \left( \sum^K_{j = 1}|{\mathbf{h}}^H_{{\rm dl},k} {\mathbf{v}}_j|^2 + |{{{h}}}_{\mathrm{SI},k} |^2 P_{{\rm up},k} + \sigma^2_k \right),
\end{equation}
where $\eta$ denotes the energy conversion efficiency at the EH of ${\rm MS}_k$ that accounts for the loss in energy transducer for converting the harvested energy into electrical energy to be stored. In practice, energy harvesting circuits are equipped at the energy harvesting receiver which are used to convert the received radio frequency (RF) power into direct current (DC) power. The efficiency of a diode-based EH is nonlinear and largely depends on the input power level \cite{valenta}. Hence, the conversion efficiency ($\eta$) should be included in the optimization expressions. However, for simplicity, we assume $\eta = 1$.

\subsection{Modelling SI}
Considering the fact that RSI cannot be eliminated completely due to the insufficient knowledge of the underlying channel, we consider a deterministic model for imperfect self-interfering channels. In particular, it is assumed that the SI channels ${{{h}}}_{\mathrm{SI},k}, \forall k,$ and ${{{\mathbf{H}}}}_{\mathrm{SI},\mathrm{BS}}$ lie in the neighbourhood of the estimated channels ${{\hat{h}}}_{\mathrm{SI},k}, \forall k,$ and ${{\hat{\mathbf{H}}}}_{\mathrm{SI},\mathrm{BS}},$ respectively, that are available at the nodes. Thus, the actual channels due to imperfect channel estimate can be modelled as
\begin{subequations}
\begin{align}
h_{\mathrm{SI},k} & = \hat{h}_{\mathrm{SI},k} + \triangle{h}_{\mathrm{SI},k},\label{SI_a} \\
\mathbf{H}_{\mathrm{SI},\mathrm{BS}}   &=  \hat{\mathbf{H}}_{\mathrm{SI},\mathrm{BS}} + \triangle \mathbf{{H}}_{\mathrm{SI},\mathrm{BS}},\label{SI_b}
\end{align}
\end{subequations}
where $\triangle{h}_{\mathrm{SI},k}$ and $\triangle \mathbf{H}_{\mathrm{SI},\mathrm{BS}}$ represent the channel uncertainties which are assumed to be bounded as
\begin{subequations}
\begin{align}
|\triangle{h}_{\mathrm{SI},k}| &= | {h}_{\mathrm{SI},k} - \hat{h}_{\mathrm{SI},k}| \leq \epsilon_1,\\
\|\triangle{\mathbf{H}}_{\mathrm{SI},\mathrm{BS}} \| & = \| {\mathbf{H}}_{\mathrm{SI},\mathrm{BS}} - \hat{ \mathbf{{H}}}_{\mathrm{SI},\mathrm{BS}}\| \leq \epsilon_2,
\end{align}
\end{subequations}
for some $\epsilon_1,\epsilon_2\geq 0$. The bounding values $\{\epsilon_k\}$ depend on the accuracy of the CSI estimates. To efficiently define the worst-case SI level, we modify (\ref{SI_a}) and (\ref{SI_b}) using the triangle inequality and the Cauchy-Schwarz inequality, respectively \cite{boyd}. It follows from (\ref{SI_a}) and (\ref{SI_b}) that
\begin{subequations}
\begin{eqnarray}
 |{h}_{\mathrm{SI},k}|^2 \! \!\!\!& = &\!\!\! | (\hat{h}_{\mathrm{SI},k} + \triangle{h}_{\mathrm{SI},k})|^2
  \leq (| \hat{h}_{\mathrm{SI},k}| + |\triangle{h}_{\mathrm{SI},k}|)^2 \nonumber\\
\!\!\!& \leq &\!\!\! |\hat{h}_{\mathrm{SI},k}|^2 + \epsilon^2_1 + 2 \epsilon_1 |\hat{h}_{\mathrm{SI},k}|,\label{SI_Defa}\\
 \|\mathbf{H}_{\mathrm{SI},\mathrm{BS}} \mathbf{v}_{k} \|^2  \!\!\!& \leq &\!\!\! \| {\mathbf{H}}_{\mathrm{SI},\mathrm{BS}} \|^2 \| \mathbf{v}_{k} \|^2 \nonumber\\
 \!\!\!& = &\!\!\! \| \hat{\mathbf{H}}_{\mathrm{SI},\mathrm{BS}} + \triangle{\mathbf{H}}_{\mathrm{SI},\mathrm{BS}}\|^2 \| \mathbf{v}_{k}\|^2 \nonumber\\
 \!\!\!& \leq &\!\!\! (\| \hat{\mathbf{H}}_{\mathrm{SI},\mathrm{BS}}  \|  + \| \triangle{\mathbf{H}}_{\mathrm{SI},\mathrm{BS}} \|)^2 \| \mathbf{v}_{k}\|^2 \nonumber\\
  \!\!\!& {\textcolor{red}{\leq}} &\!\!\!  (\| \hat{\mathbf{H}}_{\mathrm{SI},\mathrm{BS}}  \|^2 + \epsilon^2_2 + 2  \| \hat{\mathbf{H}}_{\mathrm{SI},\mathrm{BS}}  \| \epsilon_2) \|\mathbf{v}_{k} \|^2. \nonumber\\
\label{SI_Defb}
\end{eqnarray}
\end{subequations}
Note that $\epsilon_k$ is the minimal knowledge of the upper-bound of the channel error which is sufficient enough to describe the error in the absence of statistical information about the error. As a result, from (\ref{SI_Defa}) and (\ref{SI_Defb}), we obtain
 \begin{subequations}
\begin{align}
 \max_{|\triangle h_{\mathrm{SI}}| \leq \epsilon_1} &~ |{h}_{\mathrm{SI},k}|^2   \leq  |\hat{h}_{\mathrm{SI},k}|^2 + \epsilon^2_1 + 2 \epsilon_1 |\hat{h}_{\mathrm{SI},k}|,\label{mincd}\\
  \max_{\| \triangle \mathbf{{H}}_{\mathrm{SI},\mathrm{BS}} \mathbf{v}_k\| \leq \epsilon_2} &~ \|\mathbf{H}_{\mathrm{SI},\mathrm{BS}} \mathbf{v}_k \|^2
   \leq (\| \hat{\mathbf{H}}_{\mathrm{SI},\mathrm{BS}}  \|^2 + \epsilon^2_2 \nonumber\\
   &\qquad\qquad+ 2  \| \hat{\mathbf{H}}_{\mathrm{SI},\mathrm{BS}}  \| \epsilon_2) \|\mathbf{v}_k \|^2. \label{mincdd}
  %\| \hat{\mathbf{H}}_{\mathrm{SI},\mathrm{BS}}\|^2 + \epsilon^2_2.
 \end{align}
\end{subequations}
On the other hand, it holds that
\begin{eqnarray}\label{SI2_Def}
  | (\hat{h}_{\mathrm{SI},k} + \triangle{h}_{\mathrm{SI},k})|^2   \!\!\!& \geq &\!\!\!(| \hat{h}_{\mathrm{SI},k}| - |\triangle{h}_{\mathrm{SI},k}|)^2 \nonumber\\
  \!\!\!& \geq &\!\!\! | \hat{h}_{\mathrm{SI},k}|^2 + \epsilon^2_1 - 2 | \hat{h}_{\mathrm{SI},k}|^2 \epsilon_1.
\end{eqnarray}
Here, we assume that  $|\hat{h}_{\mathrm{SI}}| \geq |\triangle{h}_{\mathrm{SI}}|$ which essentially means that the error $|\triangle{h}_{\mathrm{SI}}|$ is sufficiently small in comparison to the estimate or the estimate is meaningful. Accordingly,
 \begin{equation}\label{mincd2}
 \min_{|\triangle h_{\mathrm{SI},k}| \leq \epsilon_1} |{h}_{\mathrm{SI},k}|^2 \geq | \hat{h}_{\mathrm{SI},k}|^2 + \epsilon^2_1 - 2 | \hat{h}_{\mathrm{SI},k}| \epsilon_1.
 \end{equation}

\subsection{Problem Formulation}
We assume that each ${\rm MS}_k$ is characterized with strict QoS constraints. The QoS constraints require that the SINR for the downlink channel should be higher than a given threshold denoted by $\gamma^{\mathrm{{DL}}}_k,$ at all times in order to ensure a continuous information transfer. Similarly, each ${\rm MS}_k$ also requires that its harvested power must be above certain useful level specified by a prescribed threshold denoted by $\bar{Q}_k$ in order to maintain its receiver's operation. Meanwhile, for the uplink channel, each ${\rm MS}_k$ is expected to send feedback to the BS; thus a strict QoS is required such that the SINR of the uplink channel for each ${\rm MS}_k$ is expected to be no less than a given threshold denoted as $\gamma^{\mathrm{{UL}}}_{{k}}.$ It is worth noting that FD brings SI to the BS and ${\rm MS}_k$, and thus both the BS and ${\rm MS}_k$ may not always use their maximum transmit power as it increases the level of RSI. The BS and ${\rm MS}_k$ must therefore carefully choose their transmit power. Considering the above constraints, our objective is to minimize the end-to-end sum transmit power for the MISO FD SWIPT system by jointly designing the transmit beamforming vector (${\mathbf{v}}_k$) at the BS, the transmit power $P_{{\rm up},k}$ and the receiver PS ratio, ($\rho_k$), at ${\rm MS}_k$. Hence, the problem can be formulated as shown in (\ref{probss}) (at the top of the next page).
\begin{figure*}[!t]
\begin{eqnarray}
\!\!\!& &\!\!\! \min_{{\bf v}_k, {\bf w}_k, P_{{\rm up},k} ,\rho_k} \sum^K_{k=1}  (\|{\bf v}_k\|^2 + P_{{\rm up},k}) \nonumber\\
\!\!\!& &\!\!\! \mbox{s.t.}\nonumber\\
\!\!\!& &\!\!\! \min_{\|\triangle {\bf H}_{{\rm SI},{\rm BS}} \|\leq \epsilon_2} \frac{P_{{\rm up},k} | {\bf w}^H_k {\bf h}_{{\rm ul},k}|^2}
{  \sum^K_{j \neq k} P_{{\rm up},j} |{\bf w}^H_k {\bf h}_{{\rm ul},j}|^2 +  \sum^K_{j=1} \|{{{\mathbf{H}}}}_{\mathrm{SI},\mathrm{BS}} {\mathbf{v}}_{j}  \|^2 \| \mathbf{w}_k\|^2   + \| \mathbf{w}_k\|^2}
\geq \gamma^{\mathrm{{UL}}}_{{k}}, \forall k,\nonumber\\
\!\!\!& &\!\!\! \min_{|\triangle{h}_{\mathrm{SI},k}| \leq \epsilon_1} \frac{\rho_k|{\mathbf{h}}^H_{{\rm dl},k} {\mathbf{v}}_k|^2}{\rho_k( \sum_{j \neq k} |{\mathbf{h}}^H_{{\rm dl},k} {\mathbf{v}}_j|^2  +  |{{{h}}}_{\mathrm{SI},k}|^2 P_{{\rm up},k}  + \sigma^2_k ) + \delta^2_k}\geq \gamma^{\mathrm{{DL}}}_k,\forall k \nonumber\\
\!\!\!& &\!\!\! \min_{|\triangle{h}_{\mathrm{SI},k}| \leq \epsilon_1} (1-\rho_k) \left( \sum^K_{j = 1}|{\mathbf{h}}^H_k {\mathbf{v}}_j|^2 + |{{{h}}}_{\mathrm{SI},k} |^2 P_{{\rm up},k}  + \sigma^2_k \right)\geq \bar{Q}_{k},\forall k \nonumber\\
\!\!\!& &\!\!\! %\|{\mathbf{v}}_k\|^2 \textcolor{red}{>} 0, \quad
0 < P_{{\rm up},k} \leq \mathrm{min} (\bar{Q}_k, P_{max}),\quad  0 < \|\mathbf{v}_k\|^2 \leq P_{max},  \forall k, \nonumber\\
\!\!\!& &\!\!\!  0 < \rho_k < 1, \forall k %\quad |{{\hat{h}}}_{\mathrm{SI,k}} |^2 {P}_{up,k}  \leq \bar{G}_k, \quad \|{{\mathbf{H}}}_{\mathrm{SI},\mathrm{BS}} {\mathbf{v}}_k\|^2 \leq \bar{E}_k, \quad \forall k.
 \label{probss}
\end{eqnarray}
\hrulefill
\vspace*{4pt}
\end{figure*}
Substituting the result obtained in (\ref{mincd}), (\ref{mincdd}) and  (\ref{mincd2}) into (\ref{probss}), the optimization problem in (\ref{probss}) can now be upper-bounded as given in (\ref{probs}) (at the top of the next page).
\begin{figure*}[!t]
\begin{eqnarray}
\!\!\!& &\!\!\! \min_{{{\mathbf{v}}_k, \mathbf{w}_k, P_{{\rm up},k} ,\rho_k}} \sum^K_{k=1}  (\|{\mathbf{v}}_k\|^2 + P_{{\rm up},k}) \nonumber\\
\!\!\!& &\!\!\! \mathrm{ s.t.}\nonumber\\
\!\!\!& &\!\!\! \frac{ \sum^K_{k=1} |\mathbf{w}^{H}_k {{\mathbf{h}}}_{{\rm ul},k} |^2 P_{{\rm up},k} }{\sum^K_{j \neq k} P_{{\rm up},j} |\mathbf{w}^H_k \mathbf{h}_{{\rm ul},j}|^2 +  (\| \hat{\mathbf{H}}_{\mathrm{SI},\mathrm{BS}}  \|^2 + \epsilon^2_2 + 2  \| \hat{\mathbf{H}}_{\mathrm{SI},\mathrm{BS}}  \|^2 \epsilon_2)K P_{max} \|\mathbf{w}_k \|^2  + \|\mathbf{w}_k \|^2} \geq \gamma^{\mathrm{{UL}}}_{{k}}, \forall k,\nonumber\\
\!\!\!& &\!\!\! \frac{\rho_k|{\mathbf{h}}^H_{{\rm dl},k} {\mathbf{v}}_k|^2}{\rho_k( \sum_{j \neq k} |{\mathbf{h}}^H_{{\rm dl},k} {\mathbf{v}}_j|^2  +  (|\hat{h}_{\mathrm{SI},k}|^2 + \epsilon^2_1 + 2 \epsilon_1 |\hat{h}_{\mathrm{SI},k}|^2) P_{max}  + \sigma^2_k ) + \delta^2_k}\geq \gamma^{\mathrm{{DL}}}_k,\forall k \nonumber\\
\!\!\!& &\!\!\! (1-\rho_k) \left( \sum^K_{j = 1}|{\mathbf{h}}^H_k {\mathbf{v}}_j|^2 + (| \hat{h}_{\mathrm{SI},k}|^2 + \epsilon^2_1 - 2 | \hat{h}_{\mathrm{SI},k}|^2 \epsilon_1)  P_{max} + \sigma^2_k \right)\geq \bar{Q}_{k},\forall k \nonumber\\
\!\!\!& &\!\!\!  %\textcolor{red}{\|{\mathbf{v}}_k\|^2 > 0}, \quad
0 < P_{{\rm up},k} \leq \mathrm{min} (\bar{Q}_k, P_{max}), \quad  0< \|\mathbf{v}_k\|^2 \leq P_{max}, \forall k,  \nonumber\\
\!\!\!& &\!\!\!  0 < \rho_k < 1, \forall k  %\quad |{{\hat{h}}}_{\mathrm{SI,k}} |^2 {P}_{up,k}  \leq \bar{G}_k, \quad \|{{\mathbf{H}}}_{\mathrm{SI},\mathrm{BS}} {\mathbf{v}}_k\|^2 \leq \bar{E}_k, \quad \forall k.
 \label{probs}
\end{eqnarray}
\hrulefill
\vspace*{4pt}
\end{figure*}
Note that
%in practical system settings, when the source nodes transmit at maximum available power, we can evaluate the maximum power of the SI \cite{uplink}. Thus the
the upper bound of the SI at the BS and ${\rm MS}_k$ is obtained when the source node transmits at maximum available power, i.e., when $P_{{\rm up},k} = \| \mathbf{v}_{k}\|^2 = P_{max}$ \cite{uplink}. As such, we denote the upper-bound of the SI power at the BS and ${\rm MS}_k$  as $\bar{E}$ and $\bar{G},$ respectively. Therefore, (\ref{probs}) is rewritten as
\begin{eqnarray}
\!\!\!& &\!\!\! \min_{{{\mathbf{v}}_k, \mathbf{w}_k,  P_{{\rm up},k},\rho_k}} \sum^K_{k=1}  (\|{\mathbf{v}}_k\|^2 + P_{{\rm up},k})  \nonumber\\
\!\!\!& &\!\!\! \mathrm{ s.t.}\nonumber\\
\!\!\!& &\!\!\! \frac{  P_{{\rm up},k} | \mathbf{w}^H_k {{\mathbf{h}}}_{{\rm ul},k} \hspace{0.3mm}|^2}{  \sum^K_{j \neq k} P_{{\rm up},j} |\mathbf{w}^H_k \mathbf{h}_{{\rm ul},j}|^2 +   \bar{E} \|\mathbf{w}_k\|^2   + \| \mathbf{w}_k\|^2}
 \geq \gamma^{\mathrm{{UL}}}_{{k}}, \forall k,\nonumber\\
\!\!\!& &\!\!\! \frac{\rho_k|{\mathbf{h}}^H_{{\rm dl},k} {\mathbf{v}}_k|^2}{\rho_k( \sum_{j \neq k} |{\mathbf{h}}^H_{{\rm dl},k} {\mathbf{v}}_j|^2  + \bar{G}_k + \sigma^2_k ) + \delta^2_k}
  \geq \gamma^{\mathrm{{DL}}}_k,\forall k \nonumber\\
\!\!\!& &\!\!\! (1-\rho_k) \left( \sum^K_{j = 1}|{\mathbf{h}}^H_k {\mathbf{v}}_j|^2 + \tilde{G}_k +     \sigma^2_k \right)
  \geq \bar{Q}_k,\forall k \nonumber\\
\!\!\!& &\!\!\! %\textcolor{red}{ \|{\mathbf{v}}_k\|^2 > 0}, \quad
 0 < P_{{\rm up},k} \leq \mathrm{min} (\bar{Q}_k, P_{max}),~~0 < \|\mathbf{v}_k\|^2 \leq P_{max}, \forall k \nonumber\\
\!\!\!& &\!\!\!  0 < \rho_k < 1,
 \label{probs*a}
\end{eqnarray}
where  $\bar{E} \triangleq (\| \hat{\mathbf{H}}_{\mathrm{SI},\mathrm{BS}}  \|^2 + \epsilon^2_2 + 2  \| \hat{\mathbf{H}}_{\mathrm{SI},\mathrm{BS}}  \|^2 \epsilon_2)KP_{max},$ $\bar{G}_k \triangleq (|\hat{h}_{\mathrm{SI},k}|^2 + \epsilon^2_1 + 2 \epsilon_1 |\hat{h}_{\mathrm{SI},k}|^2)  {P}_{max} $ and $\tilde{G}_k \triangleq ( | \hat{h}_{\mathrm{SI},k}|^2 + \epsilon^2_1 - 2 | \hat{h}_{\mathrm{SI},k}|^2 \epsilon_1) {P}_{max}$ denotes the maximum SI power associated with the energy harvesting constraint at ${\rm MS}_k$.

In this paper, we investigate the general case where all MSs are characterized as having a non-zero SINR and harvested power targets, i.e., $\gamma^{\mathrm{DL}}_k,\gamma^{\mathrm{UL}}_k,\bar{Q}_k> 0, \forall k$. As such, the receive PS ratio at all MSs should satisfy $0 < \rho_k < 1,$ as given by the PS ratio constraint. It is easy to see that (\ref{probs*a}) is non-convex and hard to solve. Thus, we solve this problem in a two-step process. Firstly, we observe that the QoS uplink constraint ($\gamma^{\mathrm{{UL}}}_{{k}}$) does not have the PS coefficient and this is because in our model, the BS is not designed to harvest energy. Hence, we can decompose problem (\ref{probs*a}) into two sub-problems. The resulting sub-problems can be written as
\begin{eqnarray}
\!\!\!& &\!\!\! \min_{ \mathbf{w}_k, P_{{\rm up},k}} \sum^K_{k=1}  P_{{\rm up},k} \nonumber\\
\!\!\!& &\!\!\! \mathrm{ s.t.}\nonumber\\
\!\!\!& &\!\!\! \frac{  P_{{\rm up},k} | \mathbf{w}^H_k {{\mathbf{h}}}_{{\rm ul},k} |^2}{  \sum^K_{j \neq k} P_{{\rm up},j} |\mathbf{w}^H_k \mathbf{h}_{{\rm ul},j}|^2 +   \bar{E} \|\mathbf{w}_k\|^2   + \| \mathbf{w}_k\|^2}
 \geq \gamma^{\mathrm{{UL}}}_{{k}}, \forall k, \nonumber\\
\!\!\!& &\!\!\! 0 < P_{{\rm up},k} \leq \mathrm{min} (\bar{Q}_k, P_{max}), \forall k,
 \label{probs2}
\end{eqnarray}
%\sum_{j\neq k}|{\textit{\textbf{h}}}^H_{ul,k} \textbf{\textit{m}}_j|^2
and
\begin{eqnarray}
\!\!\!& &\!\!\! \min_{{{\mathbf{v}}_k,\rho_k}} \sum^K_{k=1}  \|{\mathbf{v}}_k\|^2 \nonumber\\
\!\!\!& &\!\!\! \mathrm{ s.t.}\nonumber\\
\!\!\!& &\!\!\! \frac{\rho_k|{\mathbf{h}}^H_{{\rm dl},k} {\mathbf{v}}_k|^2}{\rho_k\left( \sum_{j \neq k} |{\mathbf{h}}^H_{{\rm dl},k} {\mathbf{v}}_j|^2  + \bar{G}_k + \sigma^2_k \right) + \delta^2_k}
  \geq \gamma^{\mathrm{{DL}}}_k,\forall k, \nonumber\\
\!\!\!& &\!\!\! (1-\rho_k) \left( \sum^K_{j = 1}|{\mathbf{h}}^H_k {\mathbf{v}}_j|^2 + \tilde{G}_k + \sigma^2_k \right)
  \geq \bar{Q}_{k},\forall k \nonumber\\
\!\!\!& &\!\!\! %\textcolor{red}{\|{\mathbf{v}}_k\|^2 > 0}, \quad
0 < \|\mathbf{v}_k\|^2 \leq P_{max}, \forall k, \nonumber\\
\!\!\!& &\!\!\!  0 < \rho_k < 1, \forall k.
 \label{probs3}
\end{eqnarray}
Note that \eqref{probs2} corresponds to optimizing the variables involved in the uplink, and \eqref{probs3} involves those in the downlink. Next, we apply SDR to the sub-problems as discussed below.

 \section{Solutions}\label{opt_soltn}
In this section, we will focus on how to solve (\ref{probs2}) and (\ref{probs3}) optimally. We start by solving (\ref{probs2}) to determine the optimal value $P^*_{{\rm up},k}$ and $\mathbf{w}^*_k.$ For given $\mathbf{w}_k,$ the optimal $P^*_{{\rm up},k}$ can be determined. Problem (\ref{probs2}) is thus reformulated as
\begin{subequations}
\begin{eqnarray}
\!\!\!& &\!\!\! \min_{ P_{{\rm up},k}} \sum^K_{k=1}     P_{{\rm up},k} \label{probs2*&} \\
\!\!\!& &\!\!\! \mathrm{ s.t.}\nonumber\\
\!\!\!& &\!\!\! \frac{  P_{{\rm up},k} | \mathbf{w}^H_k {{\mathbf{h}}}_{{\rm ul},k} |^2}{  \sum^K_{j \neq k}\! P_{{\rm up},j} |\mathbf{w}^H_k \mathbf{h}_{{\rm ul},j}|^2 \!+ \!\!  \bar{E} \|\mathbf{w}_k\|^2  \!\! + \! \| \mathbf{w}_k\|^2}
\!\geq \! \gamma^{\mathrm{{UL}}}_{{k}}\!\!,\label{probs2*&1}\\
\!\!\!& &\!\!\! 0 < P_{{\rm up},k} \leq \mathrm{min} (\bar{Q}_k, P_{max}), \label{probs2*1&} \forall k.
\end{eqnarray}
\end{subequations}
The optimal $P^*_{{\rm up},k}$ is the minimum $P_{{\rm up},k}$ which satisfies (\ref{probs2*&1}) to equality. As a result, the optimal $P_{{\rm up},k}$ is given by
\begin{eqnarray}
P^*_{{\rm up}} = \frac{\gamma^{\mathrm{{UL}}}_{{k}} (  \bar{E} \|\mathbf{w}_k\|^2   + \| \mathbf{w}_k\|^2 )}{ | \mathbf{w}^H_k {{\mathbf{h}}}_{{\rm ul},k} |^2 - \gamma^{\rm UL}_k (\sum^K_{j \neq k} | \mathbf{w}^H_k {{\mathbf{h}}}_{{\rm ul},j}|^2)}.
\end{eqnarray}

The optimal receiver can be defined as the Wiener filter \cite{uplink}
%(Alex, please cite the paper I sent you last time, and type the equivalent expression of equation (14) in that paper.)
\begin{eqnarray}
\mathbf{w}^*_k \!\!\!&=&\!\!\! \left( \sum^K_{j=1} P_{{\rm up},j} \mathbf{h}_{{\rm ul},j} \mathbf{h}^H_{{\rm ul},j} + \left[ \sigma^2_j + \sum^K_{j=1} \| \mathbf{v}_j \|^2  \right] \mathbf{I}  \right)^{-1} \!\!\times \nonumber\\
\!\!\!& &\!\!\! \sqrt{P_{{\rm up},j}} \mathbf{h}_{{\rm ul},j}.
\end{eqnarray}
Secondly, we investigate problem (\ref{probs3}) to determine the optimal value of the receive PS ratio and the transmit beamforming vector at the BS. It is worth pointing out that the feasibility of problem (\ref{probs3}) has been proved in \cite{joint_transmit}. Accordingly, by applying SDP technique to solve problem (\ref{probs3}), we define $\mathbf{Z}_k = {\mathbf{v}}_k {\mathbf{v}}^{H}_k, \forall k.$ Thus, it follows that ${\rm Rank}(\mathbf{Z}_k)\leq1, \forall k.$ If we ignore the rank-one constraint for all ${\mathbf{Z}_{k}}$'s, the SDR of problem (\ref{probs3}) can be written as
\begin{eqnarray}
\!\!\!& &\!\!\! \min_{{\mathbf{Z}_k,\rho_k}} \sum^K_{k=1}   \mathrm{Tr}(\mathbf{Z}_k)  \nonumber\\
\!\!\!& &\!\!\! \mathrm{ s.t.}\nonumber\\
\!\!\!& &\!\!\! \frac{\rho_k {\mathbf{h}}^H_{{\rm dl},k} \mathbf{Z}_k {\mathbf{h}}_{{\rm dl},k} }{\rho_k( \sum_{j \neq k} {\mathbf{h}}^H_{{\rm dl},k} \mathbf{Z}_j {\mathbf{h}}_k  + \bar{G}_k + \sigma^2_k ) + \delta^2_k}\geq \gamma^{\mathrm{{DL}}}_k,\forall k, \nonumber\\
\!\!\!& &\!\!\! (1-\rho_k) \left( \sum^K_{j = 1}{\mathbf{h}}^H_{{\rm dl},k} \mathbf{Z}_j {\mathbf{h}}_{{\rm dl},k} + \tilde{G}_k + \sigma^2_k \right)\geq \bar{Q}_{k},\forall k, \nonumber\\
\!\!\!& &\!\!\!  0 < \rho_k < 1,\forall k \nonumber\\
\!\!\!& &\!\!\! \mathbf{Z}_k \succeq 0,  \forall k.
 \label{probs6}
\end{eqnarray}
Problem (\ref{probs6}) is non-convex since both the SINR and harvested power constraints involve coupled $\mathbf{Z}_k$ and $\rho_k$'s. Nonetheless,  (\ref{probs6}) can be reformulated as the following problem:
\begin{eqnarray}
\!\!\!& &\!\!\! \min_{{\{\mathbf{Z}_k,\rho_k}\}} \sum^K_{k=1}   \mathrm{Tr}(\mathbf{Z}_k)  \nonumber\\
\!\!\!& &\!\!\! \frac{1}{\gamma^{\mathrm{{DL}}}_k} {\mathbf{h}}^H_{{\rm dl},k} \mathbf{Z}_k {\mathbf{h}}_{{\rm dl},k} -  \sum_{j \neq k} {\mathbf{h}}^H_{{\rm dl},k} \mathbf{Z}_j {\mathbf{h}}_{{\rm dl},k}  + \bar{G}_k \geq \sigma^2_k + \frac{\delta^2_k}{\rho_k}, \forall k, \nonumber\\
\!\!\!& &\!\!\!  \sum^K_{j = 1}{\mathbf{h}}^H_{{\rm dl},k} \mathbf{Z}_j {\mathbf{h}}_{{\rm dl},k} +  \tilde{G}_k \geq \frac{\bar{Q}_{k}}{(1-\rho_k)} - \sigma^2_k, \forall k,\nonumber\\
\!\!\!& &\!\!\!  0 < \rho_k < 1,\forall k,\nonumber\\
\!\!\!& &\!\!\! \mathbf{Z}_k \succeq 0, \forall k.
\label{probs7}
\end{eqnarray}
As shown in (\ref{probs7}), both $\frac{1}{\rho_k}$ and $\frac{1}{1-\rho_k}$ are convex functions over $\rho_k,$ thus problem (\ref{probs7}) is convex and can be solved using disciplined convex programming (CVX). To proceed, let $\mathbf{Z}^*_k$ denote the optimal solution to problem (\ref{probs7}). Accordingly, it follows that if $\mathbf{Z}^*_k$ satisfies ${\rm Rank}(\mathbf{Z}^*_k)=1, \forall k,$ then the optimal beamforming solution ${\mathbf{v}}^*_k$ to problem (\ref{probs3}) can be obtained from the eigenvalue decomposition of $\mathbf{Z}^*_k$, for $k = 1, \dots, K$ and the optimal PS solution of problem (\ref{probs3}) is given by the associated $\rho^*_k$'s.  However, in the case that there exists any $k$ such that ${\rm Rank}(\mathbf{Z}^*_k)>1,$ then in general the solution $\mathbf{Z}^*_k$ and $\rho^*_k$ of problem (\ref{probs7}) is not always optimal for problem (\ref{probs3}). We show in the appendix that it is indeed true that for problem (\ref{probs3}), the solution satisfies ${\rm Rank}(\mathbf{Z}^*_k)=1, \forall k.$
\begin{proposition} \label{prop_1}
Given $\gamma^{\mathrm{{DL}}}_k > 0$ and $\bar{Q}_{k}> 0, \forall k$, for (\ref{probs7}), we have
\begin{enumerate}
\item $\{\mathbf{Z}^*_k\}$ and $\{\rho_k\}$ satisfy the first two sets of constraints of (\ref{probs7}) with equality;
\item $\{\mathbf{Z}^*_k\}$ satisfies ${\rm Rank}(\mathbf{Z}^*_k) =1, \forall k$.
\end{enumerate}
\end{proposition}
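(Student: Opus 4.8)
The plan is to attack (\ref{probs7}) through its Karush--Kuhn--Tucker (KKT) conditions, which is legitimate since (\ref{probs7}) is convex and, by the feasibility established in \cite{joint_transmit}, admits a strictly feasible point so that Slater's condition and strong duality hold. First I would attach multipliers $\alpha_k\geq 0$ to each reformulated downlink SINR constraint, $\beta_k\geq 0$ to each harvested-power constraint, and a matrix multiplier $\mathbf{Y}_k\succeq 0$ to each $\mathbf{Z}_k\succeq 0$; the open conditions $0<\rho_k<1$ are inactive at the optimum and need no multiplier. Writing the Lagrangian $L$ and collecting the terms proportional to $\mathbf{Z}_k$, stationarity gives
\begin{equation}
\mathbf{Y}_k = \mathbf{B}_k - \frac{\alpha_k}{\gamma^{\mathrm{DL}}_k}\,\mathbf{h}_{{\rm dl},k}\mathbf{h}^H_{{\rm dl},k},\quad
\mathbf{B}_k \triangleq \mathbf{I} + \sum_{j\neq k}\alpha_j\,\mathbf{h}_{{\rm dl},j}\mathbf{h}^H_{{\rm dl},j} - \sum_{j=1}^{K}\beta_j\,\mathbf{h}_{{\rm dl},j}\mathbf{h}^H_{{\rm dl},j},
\end{equation}
together with the complementary slackness relation $\mathbf{Y}_k\mathbf{Z}^*_k=\mathbf{0}$.

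For claim 1) I would differentiate $L$ with respect to $\rho_k$. Only the two coupled constraints contain $\rho_k$, through the terms $\delta^2_k/\rho_k$ and $\bar{Q}_k/(1-\rho_k)$, so the corresponding stationarity condition reduces to
\begin{equation}
\frac{\alpha_k\,\delta^2_k}{\rho_k^2} = \frac{\beta_k\,\bar{Q}_k}{(1-\rho_k)^2}.
\end{equation}
Since $\delta^2_k,\bar{Q}_k>0$ and $\rho_k\in(0,1)$, this forces $\alpha_k$ and $\beta_k$ to be simultaneously zero or simultaneously positive. I would then rule out $\alpha_k=\beta_k=0$: if both multipliers vanished for some $k$, then both the SINR and the EH constraints for user $k$ would be inactive, leaving a nonempty interval of admissible $\rho_k$; within that slack one can shrink $\mathbf{Z}^*_k$ slightly and retune $\rho_k$ so as to strictly decrease $\sum_k\mathrm{Tr}(\mathbf{Z}_k)$ while preserving feasibility, contradicting optimality. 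Hence $\alpha_k,\beta_k>0$ for every $k$, and complementary slackness for the scalar constraints immediately yields that the first two sets of constraints in (\ref{probs7}) hold with equality.

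For claim 2) I would use the stationarity relation above. Because $\alpha_k\geq 0$, we can rewrite $\mathbf{B}_k=\mathbf{Y}_k+\frac{\alpha_k}{\gamma^{\mathrm{DL}}_k}\mathbf{h}_{{\rm dl},k}\mathbf{h}^H_{{\rm dl},k}$, a sum of two positive semidefinite matrices, so $\mathbf{B}_k\succeq 0$. Granting the sharper fact $\mathbf{B}_k\succ 0$ (addressed below), $\mathbf{Y}_k$ is a positive definite matrix minus a rank-one positive semidefinite term, whence $\mathrm{Rank}(\mathbf{Y}_k)\geq N_t-1$. If $\mathbf{Y}_k$ were full rank it would be invertible, and $\mathbf{Y}_k\mathbf{Z}^*_k=\mathbf{0}$ would force $\mathbf{Z}^*_k=\mathbf{0}$; this is impossible because $\gamma^{\mathrm{DL}}_k>0$ requires $\mathbf{h}^H_{{\rm dl},k}\mathbf{Z}^*_k\mathbf{h}_{{\rm dl},k}>0$. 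Therefore $\mathrm{Rank}(\mathbf{Y}_k)=N_t-1$, its null space is one-dimensional, and since the range of $\mathbf{Z}^*_k$ lies in that null space, $\mathrm{Rank}(\mathbf{Z}^*_k)\leq 1$; being nonzero, it equals $1$.

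The hard part will be establishing $\mathbf{B}_k\succ 0$: the energy-harvesting multipliers $\beta_j$ enter with a negative sign and could, a priori, cancel the identity and the positive SINR terms and render $\mathbf{B}_k$ singular. I would handle this by contradiction. Suppose $\mathbf{B}_k\mathbf{x}=\mathbf{0}$ for some unit vector $\mathbf{x}$; then $\mathbf{x}^H\mathbf{B}_k\mathbf{x}=\mathbf{x}^H\mathbf{Y}_k\mathbf{x}+\frac{\alpha_k}{\gamma^{\mathrm{DL}}_k}|\mathbf{h}^H_{{\rm dl},k}\mathbf{x}|^2=0$, and since both summands are nonnegative (using $\mathbf{Y}_k\succeq 0$ and $\alpha_k>0$ from claim 1) we must have $\mathbf{Y}_k\mathbf{x}=\mathbf{0}$ and $\mathbf{h}^H_{{\rm dl},k}\mathbf{x}=0$. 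Substituting $\mathbf{h}^H_{{\rm dl},k}\mathbf{x}=0$ back into $\mathbf{B}_k\mathbf{x}=\mathbf{0}$ confines $\mathbf{x}$ to the span of $\{\mathbf{h}_{{\rm dl},j}\}_{j\neq k}$, while $\|\mathbf{x}\|^2+\sum_{j\neq k}\alpha_j|\mathbf{h}^H_{{\rm dl},j}\mathbf{x}|^2=\sum_{j}\beta_j|\mathbf{h}^H_{{\rm dl},j}\mathbf{x}|^2$ must also hold; I expect this system to be contradictory under the generic assumption that the downlink channels $\{\mathbf{h}_{{\rm dl},j}\}$ are in general position, which is the delicate step to discharge carefully. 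Once $\mathbf{B}_k\succ 0$ is secured, the rank conclusion follows as above.
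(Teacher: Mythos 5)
Your skeleton is the same as the paper's --- KKT multipliers for the convex problem (\ref{probs7}), a slack matrix from stationarity (your $\mathbf{Y}_k$ is exactly the paper's $\mathbf{A}^*_k$), complementary slackness $\mathbf{Y}_k\mathbf{Z}^*_k=\mathbf{0}$, and the final rank count ${\rm Rank}(\mathbf{Y}_k)=N_t-1 \Rightarrow {\rm Rank}(\mathbf{Z}^*_k)=1$. Your $\rho_k$-stationarity observation is also a correct (and slightly cleaner) substitute for the paper's inspection of the inner minimization $\min_{\rho_k} \lambda^*_k\delta^2_k/\rho_k + \mu^*_k\bar{Q}_k/(1-\rho_k)$: both correctly reduce claim 1 to excluding $\alpha_k=\beta_k=0$. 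But that exclusion is where your argument breaks. Zero multipliers do \emph{not} imply the corresponding constraints are inactive (complementary slackness only runs in the other direction), so the ``nonempty interval of admissible $\rho_k$'' is unjustified. Moreover, even granting slack in user $k$'s own constraints, shrinking $\mathbf{Z}^*_k$ lowers the left-hand side of every \emph{other} user $j$'s harvested-power constraint, which contains $\mathbf{h}^H_{{\rm dl},j}\mathbf{Z}_k\mathbf{h}_{{\rm dl},j}$; if those are tight, feasibility is destroyed, and retuning $\rho_k$ cannot repair them because $\rho_k$ appears only in user $k$'s constraints, while repairing with $\rho_j$ tightens user $j$'s SINR constraint in turn. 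The perturbation does not close.

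The second gap is the one you flagged yourself: positive definiteness of your $\mathbf{B}_k$. Your appeal to channels ``in general position'' is both unnecessary and unavailable --- the proposition assumes nothing about the $\{\mathbf{h}_{{\rm dl},j}\}$. The structural flaw is that you exploit $\mathbf{Y}_k\succeq 0$ only for the single index $k$, which yields only $\mathbf{h}^H_{{\rm dl},k}\mathbf{x}=0$ and leaves the negative terms $-\beta_j|\mathbf{h}^H_{{\rm dl},j}\mathbf{x}|^2$, $j\neq k$, alive. The paper's resolution is to work with the \emph{common} matrix $\mathbf{B}^* = \mathbf{I}_{N_t} + \sum_{j\notin\Theta}(\lambda^*_j-\mu^*_j)\mathbf{h}_{{\rm dl},j}\mathbf{h}^H_{{\rm dl},j}$, where $\Theta=\{k:\lambda^*_k=\mu^*_k=0\}$, and to apply the PSD-ness of \emph{all} slack matrices $\mathbf{A}^*_j$, $j\notin\Theta$, to the same null vector $\mathbf{x}$ of $\mathbf{B}^*$: since the $\rho$-argument has already forced $\lambda^*_j>0$ for every $j\notin\Theta$, one gets $\mathbf{h}^H_{{\rm dl},j}\mathbf{x}=0$ for all $j\notin\Theta$ simultaneously, so every channel term --- positive and negative --- drops out of $\mathbf{x}^H\mathbf{B}^*\mathbf{x}$, leaving $\|\mathbf{x}\|^2>0$, a contradiction with no genericity assumption. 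This same argument also supplies the exclusion you attempted in the first gap: for $k\in\Theta$ one has $\mathbf{A}^*_k=\mathbf{B}^*\succ 0$, so complementary slackness forces $\mathbf{Z}^*_k=\mathbf{0}$, which is infeasible when $\gamma^{\mathrm{DL}}_k>0$; hence $\Theta$ is empty. So both of your missing steps are repaired by one device --- quantifying the null-space argument over all users at once rather than one user at a time --- and as written your proposal proves neither claim.
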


%\begin{algorithm}%[\ht]
%\caption{Optimal algorithm for problem (\ref{probs7})}
%\begin{algorithmic}[1]
%\State Solve problem (25) by CVX and obtain the optimal solution as $\{\textbf{Z}^*_k\}$ and $\{\rho^*_k\}$ %Initialise  $0 \le \rho \le 1$.
%\State Obtain $\mathbf{v}_k$ by EVD of $\{\textbf{Z}^*_k\},$  $k = 1, \cdot\cdot\cdot, K.$
%\end{algorithmic}
%\end{algorithm}

\begin{proof}
Please refer to Appendix A.
\end{proof}

\section{Suboptimal Solution}\label{sub}
To effectively make meaningful comparison based on the performance analysis for SWIPT in MISO FD systems, in this section, we investigate a suboptimal solution based on ZF by jointly designing the beamforming vector and PS ratios.

\subsection{ZF Beamforming}\label{zero_forcing}
To simplify the beamforming design, we add the ZF constraint. As such, by restricting ${\mathbf{v}}_k$ in (\ref{probs3}) to satisfy ${\mathbf{h}}^H_{{\rm dl},i} {\mathbf{v}}_k=0, \forall i \neq k,$ ZF can be used to eliminate multiuser interference.
%The ZF beamforming scheme can be used to eliminate multiuser interference by restricting ${\mathbf{v}}_k$ in problem (\ref{probs3}) to satisfy ${\mathbf{h}}^H_{dl,i} {\mathbf{v}}_k=0, \forall i \neq k,$ which essentially simplifies the beamforming design.
Applying the ZF transmit beamforming constraint, (\ref{probs3}) can be reformulated as
\begin{eqnarray}
\!\!\!& &\!\!\! \min_{{\{{\mathbf{v}}_k,\rho_k\}}} \sum^K_{k=1}  \|{\mathbf{v}}_k\|^2 \nonumber\\
\!\!\!& &\!\!\! \mathrm{ s.t.}\nonumber\\
\!\!\!& &\!\!\! \frac{\rho_k|{\mathbf{h}}^H_{{\rm dl},k} {\mathbf{v}}_k|^2}{\rho_k(\bar{G}_k + \sigma^2_k ) + \delta^2_k}
 \geq \gamma^{\mathrm{{DL}}}_k,\forall k, \nonumber\\
\!\!\!& &\!\!\! (1-\rho_k) \left( |{\mathbf{h}}^H_{{\rm dl},k} {\mathbf{v}}_k|^2 + \tilde{G}_k + \sigma^2_k \right)
  \geq \bar{Q}_{k},\forall k \nonumber\\
\!\!\!& &\!\!\! \mathbf{H}^H_{{\rm dl},k} {\mathbf{v}}_k = 0,  \|\mathbf{v}_k\|^2 \leq P_{max}, \forall k, \nonumber\\
\!\!\!& &\!\!\!  0 < \rho_k < 1,\forall k,
 \label{probs8}
\end{eqnarray}
in which $\mathbf{H}_{{\rm dl},k} \triangleq [\mathbf{h}_{{\rm dl},1} \cdots \mathbf{h}_{{\rm dl},k-1}, \mathbf{h}_{{\rm dl},k+1} \cdots  \mathbf{h}_{{\rm dl},K}] \in \mathbb{C}^{N_t \times (K-1)}.$ Clearly, problem (\ref{probs}) must be feasible if $N_t \geq K$ due to the ZF transmit beamforming\cite{joint_transmit}. Proposition \ref{prop^3} gives the optimal solution to problem (\ref{probs8}). %and (\ref{probs8b}).

\begin{proposition} \label{prop^3}
From the result in \cite{joint_transmit}, let $\mathbf{U}_k$ denote the orthogonal basis of the null space of $\mathbf{H}^H_{{\rm dl},k}, k = 1, \dots, K.$ The optimal solution to problem (\ref{probs8}) is thus given by
\begin{align}
 \tilde{\rho}^*_k &= {\frac{ + \beta_k \pm \sqrt{\beta^2_k + 4 \alpha_k C_k}}{2 \alpha_k}},  \forall k,\\
\tilde{{\mathbf{v}}}^*_k &= \sqrt{\gamma^{\mathrm{{DL}}}_k \left(\bar{G}_k + \sigma^2_k + \frac{\delta^2_k}{\rho_k}\right)} \frac{\mathbf{U}_k \mathbf{U}^H_k {\mathbf{h}}_{{\rm dl},k}}{\|\mathbf{U}_k \mathbf{U}^H_k {\mathbf{h}}_{{\rm dl},k} \|^2}, \forall k.
\end{align}
\end{proposition}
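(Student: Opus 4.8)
The plan is to exploit the fact that the ZF constraint confines each beamformer to a fixed subspace and that the resulting problem separates across users. First I would write $\mathbf{v}_k = \mathbf{U}_k \mathbf{a}_k$ for a new variable $\mathbf{a}_k$, where $\mathbf{U}_k$ has orthonormal columns spanning the null space of $\mathbf{H}^H_{\mathrm{dl},k}$; this automatically enforces $\mathbf{H}^H_{\mathrm{dl},k}\mathbf{v}_k=\mathbf{0}$ and gives $\|\mathbf{v}_k\|^2=\|\mathbf{a}_k\|^2$ since $\mathbf{U}_k$ is isometric. Because the objective in (\ref{probs8}) is a sum and every constraint couples only $\mathbf{v}_k$ with $\rho_k$, the problem decomposes into $K$ independent subproblems, one per user, which I would solve separately, following the approach of \cite{joint_transmit} but carrying the FD-specific self-interference terms $\bar{G}_k$ and $\tilde{G}_k$.

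For the $k$-th subproblem I would treat the received signal power $t_k \triangleq |\mathbf{h}^H_{\mathrm{dl},k}\mathbf{v}_k|^2 = |\mathbf{h}^H_{\mathrm{dl},k}\mathbf{U}_k\mathbf{a}_k|^2$ as an auxiliary variable. By the Cauchy--Schwarz inequality, $t_k \leq \|\mathbf{U}^H_k \mathbf{h}_{\mathrm{dl},k}\|^2\,\|\mathbf{a}_k\|^2$ with equality iff $\mathbf{a}_k$ is aligned with $\mathbf{U}^H_k \mathbf{h}_{\mathrm{dl},k}$, so the least $\|\mathbf{a}_k\|^2$ realising a given $t_k$ is $t_k/\|\mathbf{U}^H_k\mathbf{h}_{\mathrm{dl},k}\|^2$, attained at the matched-filter direction $\mathbf{a}_k \propto \mathbf{U}^H_k\mathbf{h}_{\mathrm{dl},k}$. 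Substituting this back produces exactly the stated $\tilde{\mathbf{v}}^*_k$ up to the scalar $\sqrt{t_k}$, using that $\|\mathbf{U}_k\mathbf{U}^H_k\mathbf{h}_{\mathrm{dl},k}\|=\|\mathbf{U}^H_k\mathbf{h}_{\mathrm{dl},k}\|$. The subproblem then reduces to minimising $t_k$ over the scalar $\rho_k\in(0,1)$ subject to the two scalar constraints, which after rearrangement read $t_k\geq f_1(\rho_k)\triangleq\gamma^{\mathrm{DL}}_k\!\left(\bar{G}_k+\sigma^2_k+\delta^2_k/\rho_k\right)$ and $t_k\geq f_2(\rho_k)\triangleq \bar{Q}_k/(1-\rho_k)-\tilde{G}_k-\sigma^2_k$.

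The crux is to identify the minimiser of $\max\{f_1(\rho_k),f_2(\rho_k)\}$. I would observe that $f_1$ is strictly decreasing and $f_2$ strictly increasing on $(0,1)$ (the former because $1/\rho_k$ decreases, the latter because $1/(1-\rho_k)$ increases), so their pointwise maximum is V-shaped and attains its minimum precisely where the two curves cross, i.e.\ where both the SINR and harvested-power constraints hold with equality, mirroring Proposition \ref{prop_1}. Setting $f_1(\rho_k)=f_2(\rho_k)$ and clearing the denominator $\rho_k(1-\rho_k)$ produces a quadratic $\alpha_k\rho_k^2-\beta_k\rho_k-C_k=0$ with $\alpha_k=\gamma^{\mathrm{DL}}_k(\bar{G}_k+\sigma^2_k)+\tilde{G}_k+\sigma^2_k$, $C_k=\gamma^{\mathrm{DL}}_k\delta^2_k$, and $\beta_k$ collecting the remaining linear terms, whose roots are the displayed $\tilde{\rho}^*_k$.

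Finally I would close the argument by selecting, among the $\pm$ roots, the one lying in $(0,1)$ (the physically meaningful split ratio, guaranteed to exist by the feasibility of (\ref{probs8}) when $N_t\geq K$), evaluating $t_k=f_1(\tilde{\rho}^*_k)=\gamma^{\mathrm{DL}}_k(\bar{G}_k+\sigma^2_k+\delta^2_k/\tilde{\rho}^*_k)$, and inserting $\sqrt{t_k}$ into the matched-filter direction to recover $\tilde{\mathbf{v}}^*_k$. The main obstacle I anticipate is the root-selection and feasibility bookkeeping: showing that exactly one root falls in $(0,1)$ and that the associated $t_k$ respects the budget $\|\mathbf{v}_k\|^2\leq P_{\max}$; the alignment and quadratic-reduction steps are otherwise routine.
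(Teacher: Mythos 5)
Your proof is correct, and it follows the same overall skeleton as the paper's Appendix B --- per-user decomposition under ZF, tightness of both constraints at the optimum, a quadratic in $\rho_k$, and a projected matched-filter direction --- but two key steps are justified by genuinely different arguments, both cleaner than the paper's. First, where the paper proves that the SINR and harvested-power constraints must both be active via three case-by-case contradiction arguments (scaling $\mathbf{v}_k$, perturbing $\rho_k$, with one case deferred to \cite{joint_transmit}), you eliminate the beamformer first and reduce the subproblem to minimizing $\max\{f_1(\rho_k),f_2(\rho_k)\}$ with $f_1(\rho_k)=\gamma^{\mathrm{DL}}_k(\bar{G}_k+\sigma^2_k+\delta^2_k/\rho_k)$ strictly decreasing and diverging as $\rho_k\to 0^+$, and $f_2(\rho_k)=\bar{Q}_k/(1-\rho_k)-\tilde{G}_k-\sigma^2_k$ strictly increasing and diverging as $\rho_k\to 1^-$; the minimum therefore sits at their unique crossing. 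This single monotonicity argument subsumes the paper's three cases and also settles the root-selection question the paper leaves implicit in its ``$\pm$'': since the quadratic is negative at $\rho_k=0$ and equals $\bar{Q}_k>0$ at $\rho_k=1$, exactly one root (the $+$ branch) lies in $(0,1)$, with no appeal to feasibility of (\ref{probs8}) needed. Second, where the paper cites \cite{joint_transmit} for the optimal direction of problem (\ref{probs11a}), you obtain it directly from Cauchy--Schwarz in the parameterization $\mathbf{v}_k=\mathbf{U}_k\mathbf{a}_k$; same conclusion, self-contained. Finally, your sign convention $C_k=+\gamma^{\mathrm{DL}}_k\delta^2_k$ is the consistent one: clearing denominators in $f_1=f_2$ gives $\alpha_k\rho_k^2-\beta_k\rho_k-\gamma^{\mathrm{DL}}_k\delta^2_k=0$, so the paper's definition $C_k=-\gamma^{\mathrm{DL}}_k\delta^2_k$ is a sign slip that would turn the stated discriminant $\beta_k^2+4\alpha_kC_k$ into $\beta_k^2-4\alpha_k\gamma^{\mathrm{DL}}_k\delta^2_k$; under your convention the displayed root formula is exact. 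The one loose end you flag --- verifying $\|\tilde{\mathbf{v}}^*_k\|^2\leq P_{max}$ --- is equally unaddressed by the paper, which simply assumes feasibility, so it does not put you behind the reference proof.
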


\begin{proof}
Please refer to Appendix B.
\end{proof}

\section{Simulation Results}\label{N_exam}
Here, we investigate the performance of the proposed joint beamforming and received PS (JBPS) optimization design for SWIPT in MISO FD systems through computer simulations. In particular, we simulated a flat Rayleigh fading environment in which the channel fading coefficients are characterized as complex Gaussian numbers with zero mean and are i.i.d.~and we assume there are  $K = 2$ MSs and all MSs have the same set of parameters i.e., $\sigma^2_k = \sigma^2,$   $\delta^2_k = \delta^2,$ $\bar{Q}_k = Q,$ and  $\gamma^{\mathrm{{DL}}}_k = \gamma^{\mathrm{{DL}}}.$ We also assume that $60\%$ of the SI power has been cancelled using existing SIC techniques \cite{duarte}. All simulations are averaged over $500$ independent channel realizations.
  % $a_k = a,$  $b_k = b,
\begin{figure}[ht]
\centering
\includegraphics*[width=8cm]{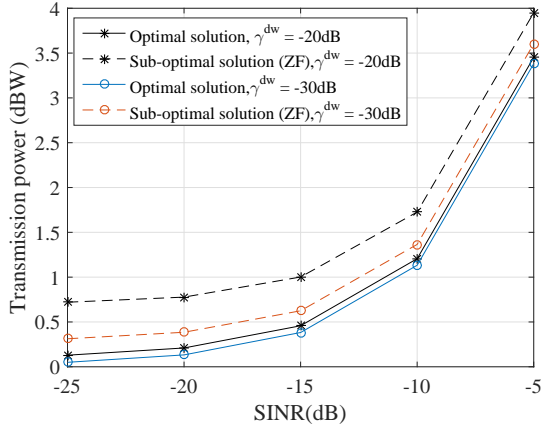}
\caption{Transmission power versus SINR, $\gamma^{\mathrm{{UL}}}$.}\label{fig_2}
\end{figure}

In Fig.~\ref{fig_2}, we investigate the minimum end-to-end transmission power for SWIPT in MISO FD systems versus the SINR target for all MSs, $\gamma^{\mathrm{{UL}}},$ for fixed harvested power threshold $Q = 20$ dBm. It is assumed that the BS is equipped with $N_t = 2$ transmit antennas. Fig.~\ref{fig_2} shows the performance comparison in terms of end-to-end sum transmit power, between the optimal JBPS solution to (\ref{probs*a}) and the suboptimal solution based on ZF beamforming. As can be observed, the minimum end-to-end sum transmit power rises with the increase in $\gamma^{\mathrm{{UL}}}.$ However, for different values of $\gamma^{\mathrm{{DL}}}$, the optimal JBPS scheme outperforms the optimization scheme based on ZF beamforming. For example, at $\gamma^{\mathrm{{DL}}} = -20$dB,  the optimal JBPS scheme achieves a near 1dB gain over the suboptimal ZF beamforming scheme. It is also observed that for both cases of $\gamma^{\mathrm{{DL}}} = -20$dB and $\gamma^{\mathrm{{DL}}} = -30$dB, the minimum end-to end-transmission power is achieved by optimal JBPS solution for all values of  $\gamma^{\mathrm{{UL}}}$. Thus, with an increase in SINR uplink threshold, $\gamma^{\mathrm{UL}},$  the optimal JBPS scheme achieves a transmit power gain over the suboptimal ZF beamforming scheme.

 \begin{figure}[ht]
\centering
\includegraphics*[width=8cm]{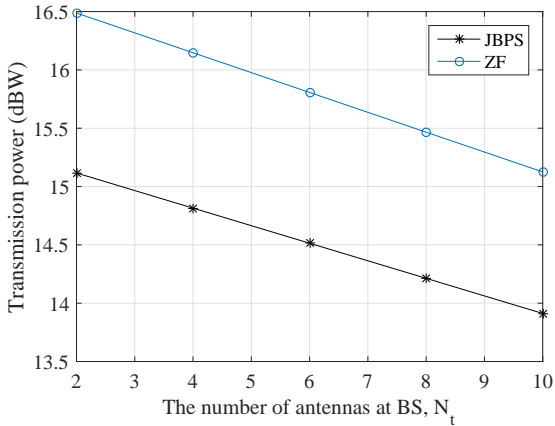}
\caption{Transmission power versus number of transmit antenna at BS, N$_t$.}\label{fig_3}
\end{figure}

In Fig.~\ref{fig_3}, we study the impact of the number of transmit antennas at the BS, $N_t,$ on the minimum end-to-end transmission power for the proposed solutions for fixed harvested power threshold, $Q = 20$dBm. As can be observed, the minimum end-to-end sum transmit power decreases with the increase in the number of the transmit antennas at the BS. However, the optimal JBPS scheme outperforms the optimization scheme based on ZF beamforming. For example, for $N_t = 2$, $\gamma^{\mathrm{{DL}}} = -20$dB and $\gamma^{\mathrm{{UL}}} = -20$dB, the optimal JBPS achieves 1dB gain over the suboptimal ZF beamforming scheme. Thus, we can conclude that more transmit antennas at the BS which adopts beamforming allow it to focus more power to ${\rm MS}_k.$
%We also see that as the number of transmit antennas at the BS increases, the performance gap between JBPS and ZF decreases.
\begin{figure}[ht]
\centering
\includegraphics*[width=8cm]{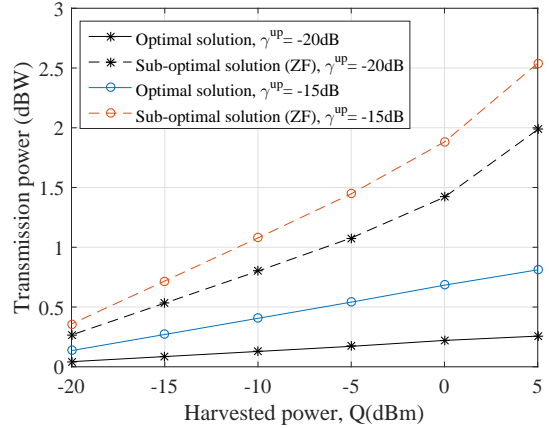}
\caption{Transmission power versus harvested energy.}\label{fig_4}
\end{figure}

In Fig.~\ref{fig_4}, we illustrate the minimum transmission power achieved by JBPS and ZF for a downlink SINR,  $\gamma^{\mathrm{DL}} = -20$dB,  for different threshold of the harvested power. As observed in Fig.~\ref{fig_4}, the optimal JBPS schemes achieves the minimum transmission power for all values of the harvested power threshold. Also, the increased harvested energy threshold demands more transmit power. We also see that for increasing values of the harvested power threshold, JBPS  achieves an increasing transmit power gain over the ZF scheme.
%furthermore, at low uplink SINR, $\gamma^{\mathrm{UP}} = -15$dB,

\section{Conclusion}\label{conc}
This paper investigated the joint transmit beamforming and receive PS design for SWIPT in MISO FD system. The end-to-end sum transmit power has been minimized subject to the given SINR and harvested power constraints for each MS by jointly optimizing the transmit beamforming vector at the BS, the PS ratio and the transmit power at the MSs. A suboptimal scheme based on ZF was also presented. We showed through simulation results that the proposed optimal scheme achieves a transmit power gain over the suboptimal ZF scheme.

\section*{Appendix A}\label{app_B}
\section*{Proof of Proposition \ref{prop_1} }\label{Proof_2}
Firstly, let us proceed to prove the first part of Proposition \ref{prop_1}. Problem (\ref{probs7}) is convex and satisfies the Slater's condition, and therefore its duality gap is zero \cite{boyd}. We denote $\{\lambda_k\}$ and $\{\mu_k\}$ as the dual variables associated with the SINR constraints and harvested power constraints of problem (\ref{probs7}), respectively. The partial Lagrangian of problem (\ref{probs7}) is thus given as shown in (\ref{alex_1}) (top of next page).
\begin{figure*}[!t]
%\begin{align}
\begin{eqnarray}
{\cal  L}(\{\mathbf{Z}_k, \rho_k, \lambda_k, \mu_k \})  \!\!\!& \triangleq &\!\!\! \sum^K_{k=1}   \mathrm{Tr}(\textbf{Z}_k) -\sum^K_{k=1}   \lambda_k\left(  \frac{1}{\gamma^{\mathrm{{DL}}}_k} {\mathbf{h}}^H_{{\rm dl},k} \mathbf{Z}_k {\mathbf{h}}_{{\rm dl},k} -  \sum_{j \neq k} {\mathbf{h}}^H_{{\rm dl},k} \mathbf{Z}_j {\mathbf{h}}_{{\rm dl},k}  + \bar{G}_k - \sigma^2_k - \frac{\delta^2_k}{\rho_k}   \right)\nonumber\\
 \!\!\!&  - &\!\!\! \sum^K_{k=1}   \mu_k \left(\sum^K_{j = 1}\mathbf{\mathbf{h}}^H_{{\rm dl},k} \mathbf{Z}_j {\mathbf{h}}_{{\rm dl},k} +  \tilde{G}_k - \frac{\bar{Q}_k}{(1-\rho_k)} + \sigma^2_k \right)
\label{alex_1}
\end{eqnarray}
%\end{align}
\hrulefill
\end{figure*}
Given the Lagrangian function, the dual function of problem (\ref{probs7}) is given by \cite[Sec.5.7.3]{boyd}
\begin{equation}
 \min_{{\mathbf{Z}_k \succeq 0, 0<\rho_k<1, \forall k}} {\cal L}(\{\mathbf{Z}_k, \rho_k, \lambda_k, \mu_k \}).
 \label{min^*}
\end{equation}
Equation  (\ref{min^*}) can explicitly be written as shown in (\ref{y6}) (at the top of the next page)
\begin{figure*}[!t]
\begin{align}
 \min_{{\mathbf{Z}_k \succeq 0, 0<\rho_k<1, \forall k}} \left\{ \sum^K_{k=1} \mathrm{Tr}(\mathbf{A}_k \mathbf{Z}_k) + \sum^K_{k=1} (-\lambda_k(\bar{G}_k -\sigma^2_k)- \mu_k(\tilde{G}_k + \sigma^2_k)) + \sum^K_{k=1} (\frac{\lambda_k \delta^2_k}{\rho_k}+ \frac{\mu_k \bar{Q}_k}{(1-\rho_k)})\right\},
\label{y6}
\end{align}
\hrulefill
\end{figure*}
where
\begin{equation}
\mathbf{A}_k = \mathbf{I}_{N_t} + \sum^K_{j=1}(\lambda_j - \mu_j) {\mathbf{h}}_{{\rm dl},j} {\mathbf{h}}^H_{{\rm dl},j}-\left(\frac{\lambda_k}{\gamma^{\mathrm{DL}}_k } + \lambda_k \right) {\mathbf{h}}_{{\rm dl},k} {\mathbf{h}}^H_{{\rm dl},k}.%\nonumber\\
%\forall k.
\end{equation}
Denote $\{\lambda^*_k\}$ and $\{\mu^*_k\}$ as the optimal dual solution to problem (\ref{probs7}). As a result, we define
\begin{equation}
\mathbf{A}^*_k = \mathbf{I}_{N_t} + \sum^K_{j=1}(\lambda^*_j - \mu^*_j) {\mathbf{h}}_{{\rm dl},j} {\mathbf{h}}^H_{{\rm dl},j}-\left(\frac{\lambda^*_k}{\gamma^{\mathrm{DL}}_k } + \lambda^*_k \right) {\mathbf{h}}_{{\rm dl},k} {\mathbf{h}}^H_{{\rm dl},k}.%\nonumber\\
%\forall k.
\end{equation}
We observe from (\ref{y6}) that for any given $k,$ $\mathbf{Z}^*_k$ must be a solution to the following problem
\begin{equation}
\min_{{\mathbf{Z}_k \succeq 0}}  \mathrm{Tr}(\mathbf{A}^*_k \mathbf{Z}_k).\label{prob_29}
\end{equation}
To guarantee a bounded dual optimal value, we must have
\begin{equation}
\mathbf{A}^*_k \succeq 0, ~\mbox{for }k = 1, 2,  \dots, K.
\end{equation}
Consequently, the optimal value for problem (\ref{prob_29}) is zero, i.e.,   $\mathrm{Tr}(\mathbf{A}^*_k \mathbf{Z}_k) = 0, k = 1, 2,  \dots€€, K,$ which in conjunction with $\mathbf{A}^*_k \succeq 0$ and $\mathbf{Z}^*_k \succeq 0, k = 1, 2,  \dots,€€€ K,$ implies that
 \begin{equation}
 \mathbf{A}^*_k\mathbf{Z}^*_k = 0,~\mbox{for } k = 1, 2,  \dots, K.\label{imp_1}
 \end{equation}
Nonetheless, from (\ref{y6}), it is observed that the optimal PS solution $\rho^*_k$ for any given $k \in \{1,  \dots, K\}$ must be a solution of the following problem:
\begin{equation}\label{opt_rho}
\min_{{\rho_k}}  \frac{\lambda^*_k \delta^2_k}{\rho_k}+ \frac{\mu^*_k \bar{Q}_k}{(1-\rho_k)}~~\mbox{s.t.}~~
0 < \rho_k < 1.
\end{equation}
Note that we observe from (\ref{opt_rho}) that for the case when $\lambda^*_k=0$ and $\mu^*_k>0$, the optimal solution will be $\rho^*_k \rightarrow 0.$ Similarly, for the case when $\mu^*_k=0$ and $\lambda^*_k>0,$ the optimal solution is $\rho^*_k \rightarrow 1.$ Since $\bar{Q}_k > 0$ and $\gamma^{\mathrm{DL}}_k \!\!> \!\!0, \forall k, 0 < \rho_k < 1$ must hold for all $k$'s in (\ref{probs7}), the above two cases cannot be true. Consequently, we prove that $\lambda^*_k = 0$ and $\mu^*_k = 0$ cannot be true for any $k$ by contradiction. Let us assume there exist some $k$'s such that $\lambda^*_k = \mu^*_k = 0.$ We therefore define a set
\begin{equation}
\Theta\triangleq \{k|\lambda^*_k=0, \mu^*_k = 0,1 \leq k \leq K\},~\mbox{where }\Theta \neq \Phi.
\end{equation}
We also define
\begin{equation}\label{equ_B}
\mathbf{B}^* \triangleq \mathbf{I}_{N_t} + \sum_{j\notin \Theta}( \lambda^*_j-\mu^*_j) {\mathbf{h}}_{{\rm dl},j} {\mathbf{h}}^H_{{\rm dl},j}.
\end{equation}
Then $\mathbf{A}^*_k$ can be written as
\begin{equation}
\mathbf{A}^*_k = \left\{ \begin{array}{ll}
         \mathbf{B}^*, & \mbox{if $k \in \Theta$};\\
        \mathbf{B}^*- \left(\frac{\lambda^*_k}{\gamma^{\mathrm{DL}}_k}+\lambda^*_k\right) {\mathbf{h}}_{{\rm dl},k} {\mathbf{h}}^H_{{\rm dl},k}, & \mbox{otherwise}.\end{array} \right.
\label{matrix_equa}
\end{equation}
Since $\mathbf{A}^*_k \succeq 0$ and $-\left(\frac{\lambda^*_k}{\gamma^{\mathrm{DL}}_k}+\lambda^*_k\right) {\mathbf{h}}_{{\rm dl},k} {\mathbf{h}}^H_{{\rm dl},k}\preceq 0,$ consequently,
$\mathbf{B}^* \succeq 0.$
Let us proceed to show that $\mathbf{B}^*\succ 0$ by contradiction. Assuming the minimum eigenvalue of $\mathbf{B}^*$ is zero, consequently, there exists at least an ${\mathbf{x}} \neq 0$ such that ${\mathbf{x}}^H \mathbf{B}^*{\mathbf{x}}= 0.$ From equation (\ref{matrix_equa}), it follows that
\begin{equation}\label{Theta}
{\mathbf{x}}^H \mathbf{A}^*_k{\mathbf{x}}=-\left(\frac{\lambda^*_k}{\gamma^{\mathrm{DL}}_k}+\lambda^*_k\right) {\mathbf{x}}^H  {\mathbf{h}}_{{\rm dl},k} {\mathbf{h}}^H_{{\rm dl},k} {\mathbf{x}}\geq 0, k \notin \Theta.
\end{equation}
Notice that we have $\lambda^*_k>0$ if $k\notin\Theta.$ Accordingly, from (\ref{Theta}) we obtain $|{\mathbf{h}}^H_{{\rm dl},k} {\mathbf{x}}|^2 \leq 0,$ $k\notin \Theta.$ It follows that
\begin{equation}
{\mathbf{h}}^H_{{\rm dl},k} {\mathbf{x}}=0, k \notin \Theta.
\end{equation}
Conclusively, we have
\begin{eqnarray}
{\mathbf{x}}^H \mathbf{B}^*{{\mathbf{x}}} \!\!\!& = &\!\!\! {\mathbf{x}}^H \left( \mathbf{I}_{N_t} + \sum_{j\notin\Theta}(\lambda^*_j-\mu^*_j) {\mathbf{h}}_{{\rm dl},j} {\mathbf{h}}^H_{{\rm dl},j}\right) {\mathbf{x}}\nonumber\\
\!\!\!& = &\!\!\! {\mathbf{x}}^H {\mathbf{x}}>0,
\end{eqnarray}
which contradicts to ${\mathbf{x}}^H \mathbf{B}^*{\mathbf{x}}=0.$ Thus, we have $\mathbf{B}^* \succ 0,$ i.e., ${\rm Rank}(\mathbf{B}^*) = N_t.$ We can therefore deduce from (\ref{matrix_equa}) that ${\rm Rank}(\mathbf{A}^*_k)=N_t$ if $k\in\Theta.$ From (\ref{imp_1}), we have $\mathbf{Z}^*_k=0$ if $k \in \Theta.$ However, we can easily verify that $\mathbf{Z}^*_k=0$ cannot be optimal for  (\ref{probs7}). Appropriately, it must follow that $\Theta = \Phi,$ i.e., $\lambda_k=0$ and $\mu_k =0$ cannot be true for any $k.$ Interestingly, as we have previously shown that both cases of $\lambda^*_k = 0,\mu^*_k = 0$  and $\lambda^*_k > 0,\mu^*_k = 0$ cannot be true for any $k,$ it follows that $\lambda^*_k > 0,\mu^*_k > 0,  \forall k.$ In agreement to complementary slakeness \cite{boyd}, the first part of of Proposition \ref{prop_1} is thus proved.
Secondly, we proceed to prove the second part of Proposition \ref{prop_1}. Since $\Theta = \Phi,$ it follows that (\ref{equ_B}) and (\ref{matrix_equa}) reduces to
\begin{equation}\label{equ_B1}
\mathbf{A}^*_k = \mathbf{B}^* -\left(\frac{\lambda^*_k}{\gamma^{\mathrm{DL}}_k}+\lambda^*_k\right) {\mathbf{h}}_{{\rm dl},k} {\mathbf{h}}^H_{{\rm dl},k}, k=1,  \dots, K.
\end{equation}
On account of the fact that we have shown from the first part of the proof that ${\rm Rank}(\mathbf{B}^*)=N_t$, it follows that ${\rm Rank}(\mathbf{A}^*_k) \geq N_t - 1,$  $k=1,  \dots, K.$ Notice that if $\mathbf{A}^*_k$ is characterized as having a full rank, then we have $\mathbf{Z}^* = 0,$ which cannot be the optimal solution to (\ref{probs7}). Thus, it follows that ${\rm Rank}(\mathbf{A}^*_k) = N_t - 1, \forall k$. According to (\ref{imp_1}), we have ${\rm Rank}(\mathbf{Z})^* = 1, k = 1, \dots, K.$ We thus proved the second part of Proposition \ref{prop_1}. By combining the proofs for both parts, we have thus completed the proof of Proposition \ref{prop_1} \cite{joint_transmit}.
%\newpage

\section*{Appendix B}\label{app_C}
\section*{Proof of Proposition \ref{prop^3} }\label{proof_3}
From (\ref{probs8}), we see that the ZF transmit beamforming constraints make it possible for us to decouple the SINR and the harvested power constraints over $k$ because the objective function in problem  (\ref{probs8}) is separable over $k$. Therefore, problem (\ref{probs8}) can be decomposed into $K$ subproblems, $k = 1, \dots,K,$  with the $k$th subproblem expressed as
\begin{eqnarray}
\!\!\!& &\!\!\! \min_{{{\mathbf{v}}_k,\rho_k}}  \|{\mathbf{v}}_k\|^2 \nonumber\\
\!\!\!& &\!\!\! \mathrm{ s.t.}\nonumber\\
\!\!\!& &\!\!\! \frac{\rho_k|{\mathbf{h}}^H_{{\rm dl},k} {\mathbf{v}}_k|^2}{\rho_k(\bar{G}_k + \sigma^2_k ) + \delta^2_k}
 \geq \gamma^{\mathrm{{DL}}}_k, \nonumber\\
\!\!\!& &\!\!\! (1-\rho_k) \left( |{\mathbf{h}}^H_{{\rm dl},k} {\mathbf{v}}_k|^2 + \tilde{G}_k + \sigma^2_k \right)
  \geq \bar{Q}_k, \nonumber\\
\!\!\!& &\!\!\! \mathbf{H}^H_{{\rm dl},k} {\mathbf{v}}_k = 0, \|\mathbf{v}_k\|^2 \leq P_{max}, \nonumber\\
\!\!\!& &\!\!\!  0 < \rho_k < 1.
 \label{probs9}
\end{eqnarray}
We remark that for problem (\ref{probs9}), with the optimal ZF beamforming solution $\tilde{{\mathbf{v}}}^*_k,$ and PS solution $\tilde{\rho}^*_k,$ the SINR constraint and the harvested power constraint should both hold with equality by contradiction. Notice the following:
\begin{itemize}
\item[(i)] Supposing that both the SINR and harvested power constraints are not tight given  $\tilde{\rho}^*_k$ and $\tilde{{\mathbf{v}}}^*_k,$ this implies that there must be an $\alpha_k,$ $0 < \alpha_k < 1$ such that with the new solution ${\mathbf{v}}^*_k = \alpha_k \tilde{{\mathbf{v}}}^*_k,$ and $\rho^*_k = \tilde{\rho}^*_k,$ either the SINR or harvested power constraint is tight. This new solution gives rise to a reduction in the transmission power which contradicts the fact that $\tilde{{\mathbf{v}}}^*_k,$ and $\tilde{\rho}^*_k$ is optimal for problem (\ref{probs9}). Therefore, the case that both the SINR and harvested power constraints are not tight cannot be true \cite{joint_transmit}.
\item[(ii)] Also, the scenario where the SINR constraint is tight but the harvested energy constraint is not tight cannot be true as  $\tilde{\rho}^*_k$ can be increased slightly such that both SINR and harvested power constraints become not tight anymore.
\item[(iii)] Similarly, the conclusions drawn in \cite{joint_transmit} also verify that the case where the harvested power constraint is tight but the SINR constraint is not tight cannot be true.
\end{itemize}

To summarize, with the optimal solution using the ZF transmit beamforming constraint, for problem (\ref{probs9}), the SINR and harvested power constraints must both hold with equality. Accordingly, problem (\ref{probs9}) is equivalent to
\begin{eqnarray}
\!\!\!& &\!\!\! \min_{{{\mathbf{v}}_k,\rho_k}}  \|{\mathbf{v}}_k\|^2 \nonumber\\
\!\!\!& &\!\!\! \mathrm{ s.t.}\nonumber\\
\!\!\!& &\!\!\! \frac{\rho_k|{\mathbf{h}}^H_{{\rm dl},k} {\mathbf{v}}_k|^2}{\rho_k(\bar{G}_k + \sigma^2_k ) + \delta^2_k}
= \gamma^{\mathrm{{DL}}}_k, \nonumber\\
\!\!\!& &\!\!\! (1-\rho_k) \left( |{\mathbf{h}}^H_{{\rm dl},k} {\mathbf{v}}_k|^2 + \tilde{G}_k + \sigma^2_k \right)
  = \bar{Q}_k, \nonumber\\
\!\!\!& &\!\!\! \mathbf{H}^H_{{\rm dl},k} {\mathbf{v}}_k = 0, \|\mathbf{v}_k\|^2 \leq P_{max}, \nonumber\\
\!\!\!& &\!\!\!  0 < \rho_k < 1.
 \label{probs9a}
\end{eqnarray}
Notice from problem (\ref{probs9a}) that the first two equality constraints can be rearranged to give the following equation
\begin{equation}
\gamma^{\mathrm{{DL}}}_k \left(\bar{G}_k + \sigma^2_k + \frac{\delta^2_k}{\rho_k}\right) = \frac{\bar{Q}_k}{(1-\rho_k)}-\tilde{G}_k - \sigma^2_k.
\label{probs10a}
\end{equation}
After some manipulations, (\ref{probs10a}) can be written as
\begin{equation}
\alpha_k \rho^2_k - \beta_k \rho_k - C_k = 0,
\end{equation}
where
\begin{align}
\alpha_k &=  \gamma^{\mathrm{{DL}}}_k (\bar{G}_k + \sigma^2_k ) + \tilde{G}_k + \sigma^2_k,\\
\beta_k &= \gamma^{\mathrm{{DL}}}_k(\bar{G}_k + \sigma^2_k) + \tilde{G}_k + \sigma^2_k - \bar{Q}_k - \gamma^{\mathrm{{DL}}}_k\delta^2_k,\\
C_k &= -\gamma^{\mathrm{{DL}}}_k \delta^2_k.
\end{align}
The optimal solution satisfying $0 < \rho_k < 1$ is given by
\begin{equation}
\tilde{\rho}^*_k = {\frac{ + \beta_k \pm \sqrt{\beta^2_k + 4 \alpha_k C_k}}{2 \alpha_k}}.
\end{equation}		
Next, we define ${\mathbf{v}}_k = \sqrt{p_k} \tilde{{\mathbf{v}}}_k$ with $\| \tilde{{\mathbf{v}}}_k\| = 1, \forall k.$ Then problem (\ref{probs10a}) is equivalent to:
\begin{eqnarray}
\!\!\!& &\!\!\! \min_{{p_k, \tilde{{\mathbf{v}}}_k}} \hspace{1mm} p_k \nonumber\\
\!\!\!& &\!\!\! \mathrm{ s.t.}\nonumber\\
\!\!\!& &\!\!\! {p_k|{\mathbf{h}}^H_{{\rm dl},k} \tilde{{\mathbf{v}}}_k|^2}=\tau_k, \nonumber\\
\!\!\!& &\!\!\! \mathbf{H}^H_{{\rm dl},k} \tilde{{\mathbf{v}}}_k = 0, \nonumber\\
 \!\!\!& &\!\!\!   \| \tilde{{\mathbf{v}}}_k\| =1,
 \label{probs11}
\end{eqnarray}
where $\tau_k \triangleq \gamma^{\mathrm{{DL}}}_k \left(\bar{G}_k + \sigma^2_k + \frac{\delta^2_k}{\rho_k}\right).$ It is evident from the first constraint of (\ref{probs11}) that to achieve the minimum $p_k,$ the optimal $\tilde{{\mathbf{v}}}_k$ should be the optimal solution to the following problem:
\begin{eqnarray}
\!\!\!& &\!\!\! \max_{{\tilde{{\mathbf{v}}}_k}} \hspace{1mm} |{\mathbf{h}}^H_{{\rm dl},k} \tilde{{\mathbf{v}}}_k|^2\nonumber\\
\!\!\!& &\!\!\! \mathrm{ s.t.}\nonumber\\
\!\!\!& &\!\!\! \mathbf{H}^H_{{\rm dl},k} \tilde{{\mathbf{v}}}_k = 0, \nonumber\\
 \!\!\!& &\!\!\!   \| \tilde{{\mathbf{v}}}_k\| =1.
 \label{probs11a}
\end{eqnarray}
Result obtained in \cite{joint_transmit} shows that the unique (up to phase rotation) optimal solution to problem  (\ref{probs11a}) is given by
\begin{equation}
\tilde{{\mathbf{v}}}_k = \frac{\mathbf{U}_k \mathbf{U}^H_k {\mathbf{h}}_{{\rm dl},k}}{\|\mathbf{U}_k \mathbf{U}^H_k {\mathbf{h}}_{{\rm dl},k} \|},
\end{equation}
where  $\mathbf{U}_k$ denotes the orthogonal basis for the null space of $\mathbf{H}^H_k.$ Accordingly, the optimal power solution is given by\cite{joint_transmit}
\begin{equation}
p_k = \frac{\tau_k}{|{\mathbf{h}}^H_{{\rm dl},k} \tilde{{\mathbf{v}}}_k|^2} = \frac{\tau_k}{\|\mathbf{U}_k \mathbf{U}^H_k {\mathbf{h}}_{{\rm dl},k} \|^2}.
\end{equation}
Thus, it follows that $\tilde{{\mathbf{v}}}_k$ for problem
 (\ref{probs9a}) is given by
 \begin{equation}
 \tilde{{\mathbf{v}}}^*_k = \sqrt{\gamma^{\mathrm{{DL}}}_k \left(\bar{G}_k + \sigma^2_k + \frac{\delta^2_k}{\rho_k}\right)} \frac{\mathbf{U}_k \mathbf{U}^H_k {\mathbf{h}}_{{\rm dl},k}}{\|\mathbf{U}_k \mathbf{U}^H_k {\mathbf{h}}_{{\rm dl},k} \|^2}.
 \end{equation}


\begin{thebibliography}{99}%
\bibitem{ruhuul2}
 M.~R. A. Khandaker, and K. K. Wong, ``Masked beamforming in the presence of energy-harvesting eavesdroppers,''
\emph{IEEE Trans. Inf. Forensics and Security}, vol. 10, pp. 40–54, Jan.
2015.

\bibitem{Alex}
A. A. Okandeji,  M.~R. A. Khandaker, and K. K. Wong, ``Wireless information and power transfer in full-duplex communication systems,'' in \emph { Proc. IEEE Int. Conf. on Commun.}, Kuala Lumpur, May. 2016.

\bibitem{liao}
J. Liao, M.~R. A. Khandaker, and K. K. Wong, ``Robust power-splitting SWIPT beamforming for broadcast channels,''  \emph{IEEE Commun. Lett.,} vol. 20, no. 1, pp. 181-184, Jan. 2016.

\bibitem{ruhuul2a}
 M.~R. A. Khandaker, and K. K. Wong, ``Max-min energy based robust secure beamforming for SWIPT,'' in \emph{ Proc. 2015 IEEE International conference on communications (ICC)}.

\bibitem{Alex_3}
A. A. Okandeji,  M.~R. A. Khandaker, K-K. Wong, and Z. Zheng, ``Joint transmit power and relay two-way beamforming optimization for energy harvesting full-duplex communications,'' to appear in \emph { Proc. IEEE Globecom First Int. workshop on full-duplex wireless Commun.}, Dec. 2016, Washington DC.

\bibitem{num_5}
 H. Zhang,  C. Li,  Y. Huang, and L. Yang, ``Secure beamforming for SWIPT in multiuser MISO broadcast channel with confidential messages,'' \emph{ IEEE Commun. Lett.,} vol. 19, no. 8, Aug. 2015.


 \bibitem{weighted}
Z. Hu, C. Yuan, F. Zhu and F. Gao, ``Weighted sum transmit power minimization for full-duplex system with SWIPT and self-energy recycling,'' {\em IEEE Access}, vol. 4, pp. 4874--4881, 2016.

\bibitem{power}
H. Liu, K. J. Kim, K. S. Kwak, and H. V. Poor, ``Power splitting-based SWIPT with decode-and-forward full-duplex relaying,'' {\em IEEE Trans. Wireless Commun.}, to appear in 2017.

\bibitem{mimo}
R. Zhang, and C. K. Ho, ``MIMO broadcasting for simultaneous wireless information and power transfer,''  \emph{IEEE Trans. Wireless Commun.,} vol. 12, no. 5, pp. 1989-2001, May 2013.

\bibitem{varshney}
L. Varshney, ``Transporting information and energy simultaneously,'' in \emph{Proc. IEEE ISIT,} Jul. 2008, pp. 1612-1616.

\bibitem{Architecture}
X. Zhou, R. Zhang, and C. K. Ho,``Wireless information and power transfer: Architecture design and rate-energy tradeoff,'' \emph{ IEEE Trans. Commun.,} vol. 61, no. 11, pp. 4754-4767, Nov. 2013.

\bibitem{Y_R}
M. R. A. khandaker and Y. Rong, ``Joint transceiver optimization for multiuser MIMO relay communication systems,'' \emph{IEEE Trans. Signal Process.,} vol. 60, no. 11, pp. 5977-5986, Nov. 2012.

\bibitem{ruhuul}
 M.~R. A. Khandaker, and K. K. Wong, ``SWIPT in MISO multicasting systems,'' \emph{IEEE Wireless Commun. Lett.}, vol. 3, pp. 277-280, June 2014.

 \bibitem{QOS}
 M.~R. A. Khandaker, and K. K. Wong, ``QoS-based multicast beamforming for SWIPT,'' in \emph{Proc. IEEE SECON Workshop Energy Harvesting Commun.,} Singapore, Jun./Jul. 2014, pp. 62-67.

\bibitem{joint_transmit}
Q. Shi, L. Liu, W. Xu and R. Zhang, `` Joint transmit beamforming and receive power splitting for MISO SWIPT systems''  \emph{IEEE Trans. on wireless Commun,} vol. 13, no. 6, Jun. 2014

 \bibitem{exp}
M.~Duarte, C.~Dick and A. Sabharwal, ``{Experiment-driven characterization of full-duplex wireles systems},''
\emph {IEEE Trans. Wireless Commun.}, vol. 11, no. 12, pp. 4296-4307, Dec. 2012.

\bibitem{method_broad}
 Y.~Hua, P. Liang, Y. Ma, A. C. Cirik, and Q. Gao, ``A method for broadband full-duplex MIMO radio,''
\emph{IEEE Signal Process. Lett.}, vol. 19, pp. 793-796. Dec. 2012.

\bibitem{Alex2}
A. A. Okandeji,  M.~R. A. Khandaker, and K. K. Wong, ``Two-way beamforming optimization for full-duplex SWIPT systems,''
 in \emph{Proc. EUSIPCO}, Budapest, Hungary. Aug. 2016.

%\bibitem{off-the-shelf}
%M.~Duarte, and A.~Sabharwal, ``Full-duplex wireless communication using off-the-shelf radios: Feasibility and first result,'' \emph{IEEE} Asilomar, 2010.

%\bibitem{joint}
%G.~Zheng, ``Joint beamforming optimization and power control for full-duplex MIMO two-way relay channel,'' \emph{IEEE Tran. on signal process.}, vol. 63, no. 3, feb. 2015.

\bibitem{pfr}
E.~Ahmed, and A. M.~Eltawil, ``All-digital self-interference cancellation technique for full-duplex systems,'' \emph {IEEE Trans. Wireless Commun.}, vol. 14, no.7, July 2015.

\bibitem{Transmit_strategies}
J. Zhang, O. Taghizadeh, and M. Haardt: ``Transmit strategies for full-duplex point-to-point systems with residual self-interference,'' in  \emph{ Proc Smart Antennas (WSA), 17th Int. ITG. Workshop}, Mar. 2013, Stuttgart, Germany.

\bibitem{practical}
 M. Jain, ``Practical, real-time, full duplex wireless,'' in \emph{Proc. ACM Mobicom,} Sep. 2011, pp. 301-312.

\bibitem{antenna}
 M. E. Knox, ``Single antenna full-duplex communications using a common carrier,''  in \emph{Proc. IEEE 13th Annu. WAMICON,} Apr. 2012, pp. 1-6.

\bibitem{on_self}
A. Balatsoukas-Stimming, P. Belanovic, K. Alexandris and A. Burg ``On self-interference suppression methods for low-complexity full-duplex MIMO,''  in \emph{Proc. Asilomar Conf. Signals, Syst. Comput.,} Nov.2013, pp. 992-997.


\bibitem{duarte}
M.~Duarte, and A.~Sabharwal, ``Full-duplex wireless communication using off-the-shelf radios: Feasibility and first result,'' \emph{IEEE} Asilomar, 2010.

\bibitem{Ref_X}
A. Sethi, V. Tapio, and M. Junti, ``Self-interference channel for full duplex transceiver,'' in \emph{ Proc. IEEE WCNC'14}.

\bibitem{valenta}
C. Valenta, and G. Durgin, ``Harvesting wireless power: Survey of energy-harvesting conversion efficiency in far field, wireless power transfer systems''  \emph{IEEE Microw. Mag.,} vol. 15, pp. 108-120, Jun.2014.

\bibitem{boyd}
S. Boyd, and L. Vandenberghe,  \emph{Convex optimization,} Cambridge University Press, 2004.

\bibitem{uplink}
D. H. N. Nguyen, L. B. Le and Z. Han, ``Optimal uplink and downlink channel assignment in a full-duplex multiuser system'' in  \emph{Proc IEEE ICC}, Kuala Lumpur, May, 2016.

%\bibitem{weighted}
%Z. Hu, C. Yuan, F. Zhu and F. Gao, ``Weighted Sum Transmit Power Minimization for Full-Duplex System With SWIPT and Self-Energy Recycling,''  IEEE Access, vol. 4, pp. 4874-4881, 2016.

%\bibitem{power}
% H. Liu; K. J. Kim; K. S. Kwak; H. V. Poor, ``Power Splitting-Based SWIPT with Decode-and-Forward Full-Duplex Relaying,'' IEEE Transactions on Wireless Communications , vol. PP, no. 99, pp.1-17, 2016.

%\bibitem{ellipsoid}
%S. J. Kim, N. Y. Soltani, and G. Giannakis, ``Resource allocation for OFDMA cognitive radios under channel uncertainty''  \emph{IEEE Trans. Wireless Commun.,} vol. 12, no.7, pp. 3578-3587, 2013.

%\bibitem{norm}
%S. J. Kim, N. Y. Soltani, and G. Giannakis, ``Robust linear optimization under general norms,''  \emph{Operations research Lett.,} vol. 32, no. 6, pp. 510-516, 2004.

%\bibitem{cvx}
%M. Grant, and S. Boyd, ``CVX: Matlab software for disciplined convex programming (online),'' http://cvxr.com/cvx, Apr. 2010.

\end{thebibliography}
\end{document}